\definecolor{darkred}  {rgb}{0.5,0,0}
\definecolor{darkblue} {rgb}{0,0,0.5}
\definecolor{darkgreen}{rgb}{0,0.5,0}
\theoremstyle{theorem}
\newtheorem{corollary}{Corollary}
\newtheorem{proposition}{Proposition}
\newtheorem{theorem}{Theorem}
\theoremstyle{definition}
\newtheorem{definition}{Definition}
\newtheorem*{remark}{Remark}
\newcommand{\N}{\mathcal{N}}
\definecolor{cool_green}{rgb}{0.0, 0.5, 0.0}
\definecolor{cool_blue}{rgb}{0.0, 0.0, 0.5}
\newcommand{\Tr}[1]{\operatorname{Tr}\left\{#1\right\}}
\def\>{\rangle}
\def\<{\langle}
\def\N#1{\left|\!\left|{#1}\right|\!\right|}
\def\mE{\mathcal{E}}
\def\openone{\mathds{1}}
\newcommand{\set}[1]{\mathsf{#1}}
\newcommand{\Cm}{\mathbb{C}^m}
\newcommand{\Cn}{\mathbb{C}^n}
\newcommand{\Cd}{\mathbb{C}^d}
\newcommand{\nR}{\mathfrak{F}}
\newcommand*\bigcdot{\mathpalette\bigcdot@{.5}}
\newcommand*\bigcdot@[2]{\mathbin{\vcenter{\hbox{\scalebox{#2}{$\m@th#1\bullet$}}}}}
\renewcommand{\qedsymbol}{\nobreak \ifvmode \relax \else
	\ifdim \lastskip<1.5em \hskip-\lastskip \hskip1.5em plus0em
	minus0.5em \fi \nobreak \vrule height0.75em width0.5em
	depth0.25em\fi}
\renewcommand{\ge}{\geqslant}
\renewcommand{\le}{\leqslant}
\renewcommand{\succeq}{\succ}
\begin{document}
	
\title{General state transitions with exact resource morphisms:\\a unified resource-theoretic approach}

\author{Wenbin Zhou}
\email{zhou.wenbin@i.mbox.nagoya-u.ac.jp}
\affiliation{Graduate  School  of  Informatics,  Nagoya
	University, Chikusa-ku, 464-8601 Nagoya, Japan}

\author{Francesco Buscemi}
\email{buscemi@i.nagoya-u.ac.jp}
\affiliation{Graduate  School  of  Informatics,  Nagoya
	University, Chikusa-ku, 464-8601 Nagoya, Japan}

\date{\today}	

\begin{abstract}
	
	Given a non-empty closed convex subset $\set{F}$ of density matrices, we formulate conditions that guarantee the existence of an $\set{F}$-morphism (namely, a completely positive trace-preserving linear map that maps $\set{F}$ into itself) between two arbitrarily chosen density matrices. While we allow errors in the transition, the corresponding map is required to be an exact $\set{F}$-morphism. Our findings, though purely geometrical, are formulated in a resource-theoretic language and provide a common framework that comprises various resource theories, including the resource theories of bipartite and multipartite entanglement, coherence, athermality, and asymmetric distinguishability.

	We show how, when specialized to some situations of physical interest, our general results are able to unify and extend previous analyses. We also study conditions for the existence of maximally resourceful states, defined here as density matrices from which any other one can be obtained by means of a suitable $\set{F}$-morphism. Moreover, we quantitatively characterize the paradigmatic tasks of optimal resource dilution and distillation, as special transitions in which one of the two endpoints is maximally resourceful.
\end{abstract}

\maketitle

\section{Introduction}

Many problems in information theory and statistics (and quantum generalizations thereof) can be reformulated as to whether a suitable transition between two objects is possible under suitable constraints. For example, noisy channel coding in information theory can be understood as the problem of transforming a number of uses of a noisy channel into less uses of a less noisy (ideally noiseless) channel, under the constraint that no side communication is available between sender and receiver. Another example is provided in mathematical statistics by the task of extracting statistics from a given sample. The exact same intuition becomes paradigmatic in statistical mechanics, most notably in thermodynamics, where a typical problem, for example, is to whether a thermodynamic process satisfying certain constraints (e.g., an isothermal process, an adiabatic process, etc.) exists between two given states.

Whenever the set of allowed transformations is constrained for some reasons, it is natural to consider the possibility of performing a forbidden transformation as a \textit{resource}~\cite{Devetak-HArrow-Winter-resource,horodecki09,COECKE201659,Chitambar2019}. The state space of the system (that is, the set of objects that are being transformed) thus ends up being divided into free states (that is, states that can be reached from \textit{any other} state by means of allowed transformations) and non-free states. The latter are implicitly defined to be the \textit{resourceful} states in the theory. Notice that the word ``state'' here is not to be meant in its strict sense of ``state of a physical system'', but denotes more generally the objects that are being transformed. These could comprise states proper, like in conventional thermodynamics, but are not limited to these. One could consider as objects, for example, statistical models (as in mathematical statistics), or noisy channels (as in information theory), or entire generalized operational theories for that matter.

In fact, it turns out that it is often easier to study resource theories, in which the first thing being defined is not the set of allowed transformations, but rather the set of free states, while allowed transformations are \textit{implicitly} defined only later, as all those transformations that map the free set into itself. Even though sometimes less clear operationally, such an approach (that we may call ``geometric'', in order to contrast it with the previous ``operational'' one that puts constraints on allowed operations) has, from a purely mathematical viewpoint, various advantages: for example, resource theories typically become asymptotically reversible only if the set of allowed transformations is taken to be the maximal one (plus \textit{epsilon}) compatible with the theory \cite{Brandao--Plenio,Brandao--Gour}.

In this paper we study general resource theories by following the geometric approach, that is, by building the theory on top of a given set of free states $\set{F}$, that we in particular assume to be non-empty, closed, and convex. Accordingly, we allow any transformation that maps $\set{F}$ into itself. For the sake of generality, in this paper we do not even assume any particular rule of composition for free states, so that we essentially work in the single-shot regime, but we still allow the input and the output systems, and their respective free sets, to change as a consequence of the transformation. In this way, even though we cannot say anything about asymptotic rates, we can still discuss in full generality single-shot rates for resource distillation and dilution, for example.

The main aspect that differentiates our work from previous ones is that here we do not focus only on the tasks of resource dilution and distillation. Instead, we consider the more general problem of formulating sufficient (and, in some cases, necessary) conditions for the existence of an $\set{F}$-preserving transformation between an \textit{arbitrarily} given input--target pair of states. This means that the analysis presented here does not rely on the existence of any privileged maximally resourceful state (like the maximally entangled state in bipartite entanglement theory) and thus applies to quite general resource theories.

Concerning the kind of transformations that we consider in this work, two remarks are in order. First, while we allow for noisy transitions, that is, situations in which the input state is mapped not exactly but ``close'' to the wanted target, we only allow transitions implemented by operations that are \textit{exactly} $\set{F}$-preserving. This stands in contrast to other single-shot analyses of general resource theories in which ``almost-free'' operations are allowed in the finite block-length regime (see, e.g., Ref.~\cite{Brandao--Gour}). Second, we search for conditions that are expressed in terms of resource monotones (a notion to be rigorously introduced in what follows), computed \textit{separately} for the initial state and for the target state. On the one hand, in this way we are able to ``decouple'' the initial state from the target state, and to speak of their respective resource's worth, independently of each other. On the other hand, this means that functions that comprise both states at once, like those that in some situations can be obtained by means of semidefinite linear programs~\cite{Buscemi2017,Gour-Jennings-Buscemi-Duan}, will not be considered in this work. Another aspect of the problem that is not considered in this work is the computational complexity of computing resource monotones.

The paper is structured as follows. After introducing the notation and reviewing some basic notions in Section~\ref{sec:preliminaries}, we present our main results and prove them in detail in Section~\ref{sec:main}. Section~\ref{sec:applications} presents some applications of our general analysis to specific cases of physical interest: applications to the resource theories of athermality and asymmetric distinguishability (Section~\ref{sec:singleton}); bipartite entanglement (Section~\ref{sec:entanglement}); conditions for the existence of maximally resourceful states, together with necessary and sufficient conditions for dilution and distillation which can give, whenever a dimension scaling is provided, optimal dilution and distillation rates (Section~\ref{sec:maxres}). Section~\ref{conclusion} concludes the paper.

\section{Mathematical Preliminaries}\label{sec:preliminaries}

We denote by $\set{D}(\Cm)$ the set of all $m$-by-$m$ complex density matrices $\rho$, i.e., $\rho\ge 0$ and $\Tr{\rho}=1$, which are used here to represent quantum states of $m$-dimensional quantum systems. Within $\set{D}(\Cm)$, we identify a non-empty closed convex subset $\set{F}$ as the set of ``free states''. The closure and the convexity of $\set{F}$ is crucial in various steps of our proofs, for example, when invoking the closure under convex mixtures, or when applying a variant of the minimax theorem that requires convex domain. Here and throughout this work, \textit{resource morphisms} (or more precisely $\set{F}$-morphisms\footnote{Here we prefer the term ``resource morphisms'' to the more common ``free operations'' because it reminds the fact that the foundational concept, in the geometric approach in which we are working, is the free set $\set{F}$, not the set of allowed transformations, which are just defined as all those that map $\set{F}$ into itself.}) are defined as completely positive, trace-preserving (CPTP) linear maps $\mE:\set{D}(\Cm)\to \set{D}(\Cm)$ such that $\mE(\set{F})\subseteq \set{F}$. More generally, one may consider CPTP maps that change the dimension of the system, for example, $\mE:\set{D}(\Cm)\to \set{D}(\Cn)$. Also in this case, whenever the \textit{output} free set $\set{F}'$ is also specified, it is possible to define a notion of resource morphisms by the condition that $\mE(\set{F})\subseteq \set{F}'$. However in what follows, for the sake of readability, we will restrict ourselves to the case of equal input and output dimensions and $\set{F}=\set{F}'$, keeping in mind however that all the results we derive can be straightforwardly extended to the general case. We will go back to the more general setting, with different input and output systems, in Section~\ref{sec:applications} when discussing various applications like the tasks of resource dilution and distillation.

The general structure that we study in this work is the following:

\begin{framed}
	
\begin{definition}[Resourcefulness Preorder]\label{def:preorder}
	Given two density matrices $\rho, \sigma \in \set{D}(\Cm)$, we write $	\rho \succeq_\epsilon \sigma$ whenever there exists a resource morphism $\mE$ such that 
	$\frac12\N{\sigma-\mE(\rho)}_1\le \epsilon$, where $\N{X}_1:=\Tr{\sqrt{X^\dag X}}$ denotes the trace-norm. In particular, we write $\rho\succeq\sigma$ whenever $\rho\succeq_{\epsilon=0}\sigma$. 
\end{definition}

\end{framed}

In the above definition, the preorder $\succeq_{\epsilon}$ has been introduced with respect to the distance induced by the trace-norm, although it is possible to use any other well-behaved distance measure between density matrices (like the fidelity, for example) without substantially changing the results~\cite{Nielsen--Chuang,wildebook13}. In any case, an important thing to notice in Definition~\ref{def:preorder} is that, even though errors are allowed in the state transformation, we always require the constraint $\mE(\set{F})\subseteq\set{F}$ to be \textit{strictly} satisfied.

The resourcefulness preorder naturally lead us to define a maximally resourceful element as follows:

\begin{framed}
\begin{definition}[Maximally resourceful element]\label{def:max-res-state}
	An element $\alpha \in \set{D}(\Cm)$ is said to be \textit{maximally resourceful} if $\alpha\succ\sigma$ for any $\sigma \in \set{D}(\Cm)$.
\end{definition}
\end{framed}

Given a general resource theory, an important question is to whether the theory possesses maximally resourceful elements or not. In Section~\ref{sec:maxres} we will consider sufficient conditions for their existence. However, we recall that our main results do not rely in any way on the existence of maximally resourceful elements.

\subsection{Information-theoretic divergences}

In what follows, for any operator $\rho\in\set{D}(\Cm)$ we denote by $\Pi_\rho$ the orthogonal projector onto its support (i.e., the orthogonal complement of its kernel). Moreover, for any $\epsilon\in[0,1]$, we denote by $\mathsf{B}^\epsilon(\rho)$ the set of operators $\{\rho'\in\set{D}(\Cm):\N{\rho-\rho'}_1\le2\epsilon \}$ and by $\mathsf{P}^\epsilon(\rho)$ the set of operators $\{P:0\le P\le\openone\text{ and }\Tr{\rho P}\ge 1-\epsilon \}$. All logarithms are taken in base 2.

\begin{framed}
\begin{definition}[Relative entropies]
	Given two density matrices $\rho,\sigma\in\set{D}(\Cm)$, we define
	\begin{enumerate}[(i)]
		\item the \textit{Umegaki relative entropy}~\cite{umegaki62}:
		\begin{align}
		D(\rho\|\sigma):=
		\begin{cases}
		\Tr{\rho\ (\log\rho-\log\sigma)}\;,&\text{ if }\Pi_\sigma\ge\Pi_\rho\;,\\
		+\infty\;,&\text{ otherwise}\;;
		\end{cases}
		\end{align}
		\item the \textit{hypothesis testing relative entropy}~\cite{datta09}: for any $\epsilon\in[0,1]$
		\begin{align}
		D_h^{\epsilon}(\rho\|\sigma):=-\log\quad \min_{P\in\mathsf{P}^\epsilon(\rho)}\quad\Tr{\sigma\ P} \;,
		\end{align}
		with the convention $-\log 0:=+\infty$; for $\epsilon=0$, one recovers the \textit{min-divergence}, defined as \cite{datta08}
		\begin{equation}
		D_{\min}(\rho\|\sigma):=-\log \Tr{\sigma\ \Pi_{\rho}}\;;
		\end{equation}
		\item the \textit{max-divergence} \cite{datta08}:
		\begin{equation}
		D_{\max}(\rho\|\sigma):=
		\begin{cases}
		\log\min\{\lambda\in\mathbb{R}:\lambda\sigma-\rho\ge 0 \}\;,&\text{ if }\Pi_\sigma\ge\Pi_\rho\;,\\
		+\infty\;,&\text{ otherwise}\;;
		\end{cases}
		\end{equation}
		in this case we also define a ``smoothed'' version as follows: for any $\epsilon\in[0,1]$, 
		\begin{equation}\label{eq:dmax-smooth}
		D_{\max}^{\epsilon}(\rho\|\sigma):= \inf_{\rho'\in\mathsf{B}^{\epsilon}(\rho)}D_{\max}(\rho'\|\sigma)\;.
		\end{equation}
		\end{enumerate}
\end{definition}
\end{framed}

A crucial property satisfied by all these divergences is the monotonicity under CPTP linear maps, that is, for example, $D(\mE(\rho)\|\mE(\sigma)\le D(\rho\|\sigma)$ and analogously for the others. In a resource theory characterized by the set of free states $\set{F}$, we also introduce the following:

\begin{framed}
\begin{definition}[Max-divergence of resourcefulness]\label{def:max-div-resourcefulness}
	Given two density matrices $\rho,\sigma\in\set{D}(\Cm)$ and a non-empty closed convex subset $\set{F}\subseteq\set{D}(\Cm)$, the max-divergence relative to $\set{F}$ is defined as
	\begin{equation}
	D_{\max,\set{F}}(\rho\|\sigma):=
	\log\inf\left\{\lambda\in\mathbb{R}:\frac{\lambda\sigma-\rho}{\lambda-1}\in\set{F} \right\}\;,
	\end{equation}
	with the convention that $\inf\varnothing=+\infty$. Its ``smoothed'' version is defined in analogy with~\eqref{eq:dmax-smooth}, that is
	\begin{equation}
	D_{\max,\set{F}}^\epsilon(\rho\|\sigma):=\inf_{\rho'\in\mathsf{B}^{\epsilon}(\rho)}D_{\max,\set{F}}(\rho'\|\sigma)\;.
	\end{equation}
\end{definition}
\end{framed}

We notice that if $\set{F}=\set{D}(\Cm)$, then $D_{\max,\set{F}}(\rho\|\sigma)=D_{\max}(\rho\|\sigma)$, but in general $D_{\max,\set{F}}(\rho\|\sigma)\ge D_{\max}(\rho\|\sigma)$. Moreover, while $D_{\max,\set{F}}$ may fail to be monotonic under general CPTP maps, it is monotonic under the action of resource morphisms, that is, CPTP maps that map $\set{F}$ into itself. This can be easily seen by noticing that, if for some $\lambda$, $(\lambda-1)^{-1}(\lambda\sigma-\rho)$ is in $\set{F}$, then, for any resource morphism $\mE$, also $(\lambda-1)^{-1}\mE(\lambda\sigma-\rho)=(\lambda-1)^{-1}(\lambda\mE(\sigma)-\mE(\rho))$ is automatically in $\set{F}$, so that $D_{\max,\set{F}}(\mE(\rho)\|\mE(\sigma) )$ cannot be larger than $D_{\max,\set{F}}(\rho\|\sigma)$.

\subsection{Resource monotones}

We say that a function $f:\set{D}(\Cm)\to[0,+\infty]$ constitutes a \textit{resource monotone} if it achieves its global minimum on all elements of $\set{F}$, and it does not increase under the action of resource morphisms, i.e., $f(\rho)\ge f(\mE(\rho))$ for any resource morphism $\mE$. More properties can be demanded (and are indeed desirable) in order to fruitfully work with concrete examples of resource monotones. The information-theoretic divergences introduced above can be used to introduce resource monotones that inherit many useful properties from the parent divergence. In our construction, the following quantities play a central role~\cite{Liu-Bu-Takagi-PRL}.

\begin{framed}
\begin{definition}[Entropic Resource Monotones]\label{def:entropic-monotones}
	Given a non-empty closed convex set $\set{F}\subseteq\set{D}(\Cm)$, for any density matrix $\rho\in\set{D}(\Cm)$ and any $\epsilon\in[0,1]$, we define the following quantities:
	\begin{enumerate}[(i)]
		\item $\mathfrak{D}(\rho):=\inf_{\omega \in \set{F}}D(\rho\|\omega)$;
		\item $\mathfrak{D}_h^\epsilon(\rho):=-\log \max_{\omega \in \set{F}}\min_{P\in\mathsf{P}^\epsilon(\rho)}\Tr{P\ \omega}$, with the convention $-\log 0:=+\infty$;
		\item $\mathfrak{D}_{\max}^\epsilon(\rho):=\inf_{\omega \in \set{F}}D_{\max}^\epsilon(\rho\|\omega)$;
		\item $\mathfrak{D}_{\max,\set{F}}^\epsilon(\rho):=\inf_{\omega \in \set{F}}D^\epsilon_{\max,\set{F}}(\rho\|\omega)$.
	\end{enumerate}
	In the case $\epsilon=0$, we simply remove the superscript; the only exception is $\mathfrak{D}_h^{\epsilon=0}(\rho)$, for which we will use the special notation $\mathfrak{D}_{\min}(\rho)$.
\end{definition}
\end{framed}

The above quantities are all well-behaved resource monotones. This fact is a direct consequence of the monotonicity of the parent divergences under the action of resource morphisms.

\begin{framed}
\begin{definition}[Free fraction and generalized free fraction]\label{def:robustnesses}
	Given a non-empty closed convex free set $\set{F}\subseteq\set{D}(\Cm)$, the \textit{free fraction} of a density matrix $\rho\in\set{D}(\Cm)$ is defined by the formula
	\begin{equation}\label{eq:robustness}
	\nR(\rho):=\max\{p\in[0,1]: \exists\ \omega \in \set{F} 	\text{ s.t. }p\rho+(1-p)\omega\in \set{F} \}\;.
	\end{equation}
	When mixing with general $\omega\in\set{D}(\Cm)$ instead of $\omega\in\set{F}$ is allowed, one obtains the \textit{generalized free fraction}, defined as
	\begin{equation}\label{eq:gen-robustness}
	\nR_g(\rho):=\max\{p\in[0,1]: \exists\ \omega \in \set{D}(\Cm) 	\text{ s.t. }p\rho+(1-p)\omega\in \set{F} \}\;.
	\end{equation}
\end{definition}
\end{framed}

The free fraction and the generalized free fraction are related to the \textit{robustness} $\mathfrak{R}(\rho)$~\cite{Vidal--Tarrach--Rolf} and the \textit{generalized robustness} $\mathfrak{R}_g(\rho)$~\cite{Steiner--Michael}, respectively, through the relations $\nR(\rho)^{-1}=1+\mathfrak{R}(\rho)$ and $\nR_{g}(\rho)^{-1}=1+\mathfrak{R}_g(\rho)$, and they are both directly related with the entropic resource monotones in Definition~\ref{def:entropic-monotones} as follows:
	\begin{align}
	&-\log{\nR(\rho)}=\log(1+\mathfrak{R}(\rho))=\mathfrak{D}_{\max,\set{F}}(\rho)\;,\\
	&-\log{\nR_{g}(\rho)}=\log(1+\mathfrak{R}_g(\rho))=\mathfrak{D}_{\max}(\rho)\;,
	\end{align}
with the convention $-\log 0:=+\infty$. In particular, we have that $\mathfrak{D}_{\max}(\rho)$ coincides with the \textit{generalized logarithmic robustness} of \cite{datta08,datta-LogRobustness}, while $\mathfrak{D}_{\max,\set{F}}(\rho)$ coincides with the \textit{logarithmic robustness} of~\cite{brandao--datta}. For this reason, in what follows, when speaking of $\mathfrak{D}_{\max}(\rho)$ (respectively, $\mathfrak{D}_{\max,\set{F}}(\rho)$) we will follow the mainstream convention and call it ``generalized log-robustness'' (respectively, ``log-robustness'') even though, depending on the context, it would be more appropriate to use the term we introduced above, that is, ``generalized log-free fraction'' (respectively, ``log-free fraction'').

\begin{remark}
	All resource monotones introduced above would still be well-defined monotones even if the class of resource morphisms were enlarged to comprise also positive, but not completely-positive, linear maps. This seems no coincidence, since at the single-shot level, where no rule for composing system is given yet, there really is no compelling mathematical reason to limit the discussion to CPTP linear maps only. This is a common feature of various problems in quantum statistics, in particular quantum decision theory, where the theory becomes simpler if one works with quantum statistical morphism (which may violate complete positivity) and introduce CPTP maps as special cases, rather than starting from the beginning with fully blown CPTP maps~\cite{Buscemi2012}. Here we do not investigate further into this point, and simply justify the assumption of CPTP-ness on practical grounds.
\end{remark}

\subsection{Optimal convex decompositions}\label{sec:convex-dec}

Our main results rely on the following construction, whose intuitive picture is given in Fig.~\ref{fig:generalrobustness} below.

\begin{figure}[h!]
	\includegraphics[scale=0.3]{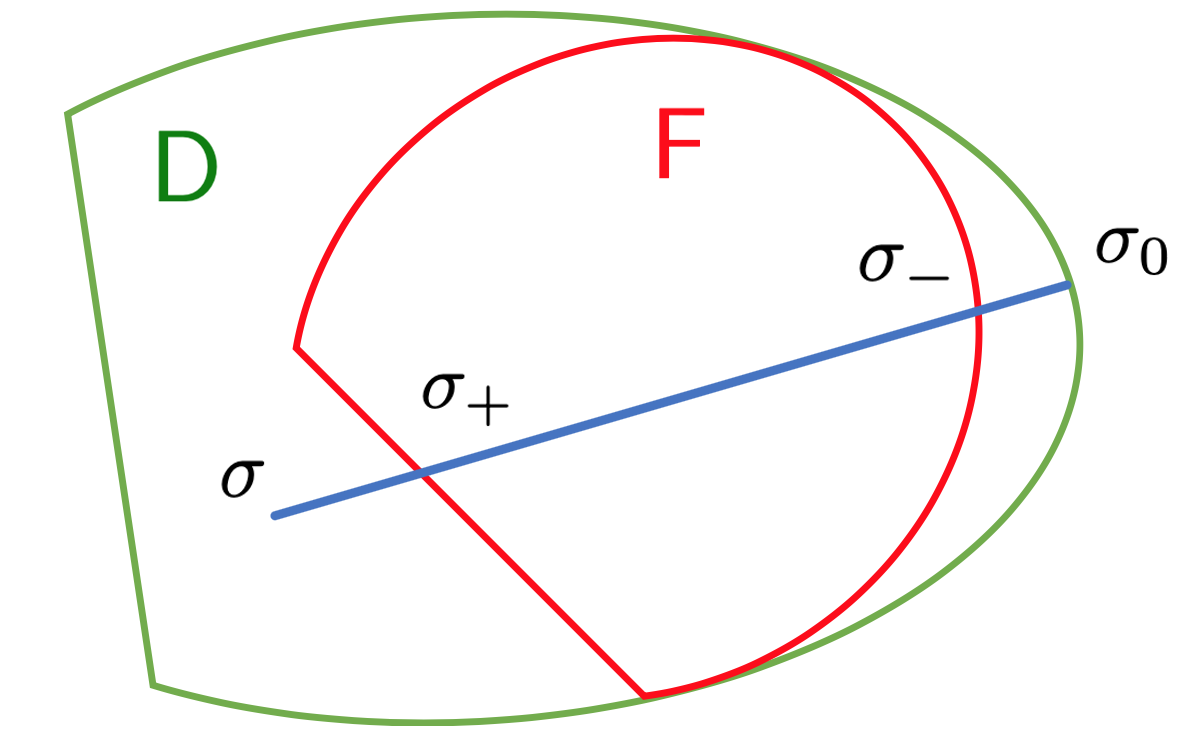}
	\caption{Geometric intuition for the generalized free fraction introduced in Definition~\ref{def:robustnesses}. Here $\sigma_{0}$ denotes the optimized density matrix, which is able to achieve, by means of convex mixing, the generalized free fraction of $\sigma$.}
	\label{fig:generalrobustness}
\end{figure}

Given $\sigma\in\set{D}(\Cm)$, assuming $\sigma\notin\set{F}$, let us fix a convex decomposition achieving its generalized free fraction and write it as
\begin{align}\label{eq:fraction}
\sigma_{+}=\nR_{g}(\sigma) \sigma+[1-\nR_{g}(\sigma)]\sigma_0\;.
\end{align}
In the above equation, due to the optimality of $\nR_{g}$, $\sigma_+\in\set{F}$ lies on the border of $\set{F}$, while $\sigma_0$ lies on the border of $\set{D}(\Cm)$, as depicted in Fig.~\ref{fig:generalrobustness}. The above decomposition includes the situation in which $\nR_g(\sigma)=0$, that is, $\sigma_+=\sigma_0$. For any decomposition as in~\eqref{eq:fraction}, another free state $\sigma_{-}$ can be uniquely defined using the max-divergence of resourcefulness (Definition~\ref{def:max-div-resourcefulness}) as follows:
\begin{align}
	\label{robustness ratio}
	\sigma_{-}:&=\frac{2^{D_{\max,\set{F}}(\sigma\|\sigma_{+})}\sigma_{+}-\sigma}{2^{D_{\max,\set{F}}(\sigma\|\sigma_{+})}-1}\\
	&=\left[\frac{\nR_{g}(\sigma)\,2^{D_{\max,\set{F}}(\sigma\|\sigma_{+})}-1}{2^{D_{\max,\set{F}}(\sigma\|\sigma_{+})}-1}\right]\sigma+\left[1-\frac{\nR_{g}(\sigma )\,2^{D_{\max,\set{F}}(\sigma\|\sigma_{+})}-1}{2^{D_{\max,\set{F}}(\sigma\|\sigma_{+})}-1}\right]\sigma_0\;,\label{eq:to-derive}
\end{align}
whenever $D_{\max,\set{F}}(\sigma\|\sigma_{+})<+\infty$, or $\sigma_{-}:=\sigma_{+}$ otherwise. In order to derive~\eqref{eq:to-derive} we just plugged~\eqref{eq:fraction} into~\eqref{robustness ratio} and rearranged terms. Notice that since $\sigma\notin\set{F}$, we have $\sigma\neq\sigma_+$ and $D_{\max,\set{F}}(\sigma\|\sigma_{+})>0$. It is easy to check that, by construction, $\sigma_{-}$, as $\sigma_+$, lies on the intersection between the border of $\set{F}$ and the segment joining $\sigma$ with $\sigma_0$. Our main results will originate from a careful evaluation of the relative distances between these four points in state space.

\section{Main Results}\label{sec:main}

In this section, we state and prove the main results of this paper. Firstly we derive, for any finite-dimensional resource theory in which the set of free states is non-empty closed and convex, sufficient conditions for the existence of a resource morphism between any two states, given in terms of resource monotones. Such conditions are formulated so to allow, in general, non-zero errors in the state transition, while the operation implementing the transition is an exact resource morphism.

\begin{framed}
\begin{theorem}\label{th:3}
	Let us arbitrarily fix two states, $\rho,\sigma\in\set{D}(\Cm)$, and two values $\epsilon_1, \epsilon_2 \in[0,1]$. Let us moreover choose $\tilde{\sigma}\in\mathsf{B}^{\epsilon_2}(\sigma)$ and $\tilde\sigma_{+}\in\set{F}$ so that $D_{\max}(\tilde\sigma\|\tilde\sigma_+) =\mathfrak{D}^{\epsilon_2}_{\max}(\sigma)$.
	 \begin{enumerate}[(i)]
	 	\item If $\mathfrak{D}^{\epsilon_1}_{h}(\rho)=+\infty$, then $\rho\succeq_{\epsilon_1}\sigma$.
	 	\item If $\mathfrak{D}^{\epsilon_2}_{\max}(\sigma)=0$, then $\rho\succeq_{\epsilon_2}\sigma$.
	 	\item If $\mathfrak{D}^{\epsilon_1}_{h}(\rho)<+\infty$ and $\mathfrak{D}^{\epsilon_2}_{\max}(\sigma)>0$, then
	 	\begin{enumerate}[(a)]
	 		\item either $D_{\max,\set{F}}(\tilde{\sigma}\|\tilde\sigma_{+})<+\infty$; in such a case, $\rho \succeq_{\epsilon_1+\epsilon_2} \sigma$ if the following two conditions simultaneously hold:
	 		\begin{align}\label{error first}
	 		\mathfrak{D}^{\epsilon_1}_{h}(\rho)	\ge		 \mathfrak{D}^{\epsilon_2}_{\max}(\sigma)
	 		\end{align}
	 		and
	 		\begin{align}
	 		\label{error second}
	 		& 2^{-\max_{\omega \in \set{F}}D^{\epsilon_1}_{h}(\rho\|\omega)}
	 		\ge \frac{2^{D_{\max,\set
	 					F}(\tilde{\sigma}\|\tilde\sigma_{+})-\mathfrak{D}^{\epsilon_1}_{h}(\rho)}-1}{2^{D_{\max,\set
	 					F}(\tilde{\sigma}\|\tilde\sigma_{+})}-1} \;;
	 		\end{align}
	 		\item or $D_{\max,\set{F}}(\tilde{\sigma}\|\tilde\sigma_{+})=+\infty$; in such a case, $\rho \succeq_{\epsilon_1+\epsilon_2} \sigma$ if condition~\eqref{error first} above holds together with
	 		\begin{align}\label{eq:max=min}
	 		\max_{\omega \in \set{F}}D^{\epsilon_1}_{h}(\rho\|\omega)=\min_{\omega \in \set{F}}D^{\epsilon_1}_{h}(\rho\|\omega)\;.
	 		\end{align}
	 	\end{enumerate} 
\end{enumerate}
\end{theorem}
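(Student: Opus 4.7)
The approach is to construct, in each case, an explicit resource morphism $\mE$ of measure-and-prepare form, $\mE(X)=\Tr{PX}\tau_1+\Tr{(\openone-P)X}\tau_2$, with the POVM element $P$ obtained by applying Sion's minimax theorem to the bilinear form $(\omega,P)\mapsto\Tr{\omega P}$ on the convex compact sets $\set{F}\times\mathsf{P}^{\epsilon_1}(\rho)$, and the preparations $\tau_1,\tau_2$ chosen to exploit the four-point collinear structure $\{\tilde\sigma,\tilde\sigma_0,\tilde\sigma_+,\tilde\sigma_-\}$ isolated in Section~\ref{sec:convex-dec}. Cases (i) and (ii) should be disposed of first: in (i) the minimax swap applied to $\mathfrak{D}^{\epsilon_1}_h(\rho)=+\infty$ produces a $P$ with $\Tr{\rho P}\ge 1-\epsilon_1$ annihilating every free state, so $\mE(X):=\Tr{PX}\sigma+\Tr{(\openone-P)X}\omega_0$ (any $\omega_0\in\set{F}$) is a resource morphism whose output differs from $\sigma$ by at most trace distance $2\epsilon_1$; in (ii) the hypothesis $\mathfrak{D}^{\epsilon_2}_{\max}(\sigma)=0$ yields a free state $\sigma'\in\set{F}\cap\mathsf{B}^{\epsilon_2}(\sigma)$, and the replacement channel $\mE(X):=\sigma'$ suffices.

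For case (iii)(a) the plan is to take $P$ as the optimal POVM element in the minimax formulation of $\mathfrak{D}^{\epsilon_1}_h(\rho)$, so that $t^\star:=\Tr{\rho P}\ge 1-\epsilon_1$, $a:=\max_{\omega\in\set{F}}\Tr{\omega P}=2^{-\mu}$ with $\mu:=\mathfrak{D}^{\epsilon_1}_h(\rho)$, and $\beta:=\min_{\omega\in\set{F}}\Tr{\omega P}\ge 2^{-\nu}$ with $\nu:=\max_{\omega\in\set{F}} D^{\epsilon_1}_h(\rho\|\omega)$. The preparations $\tau_1,\tau_2$ are then pinned down by requiring the affine line $\xi(s):=s\tau_1+(1-s)\tau_2$ to hit $\tilde\sigma_+$ at $s=a$ and $\tilde\sigma_-$ at $s=\beta$; convexity of $\set{F}$ and $\tilde\sigma_\pm\in\set{F}$ then force $\xi([\beta,a])\subseteq\set{F}$, so the free-preservation condition $\mE(\set{F})\subseteq\set{F}$ is automatic. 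By the collinearity identity $\tilde\sigma=\tilde\sigma_-+\tilde\Lambda(\tilde\sigma_+-\tilde\sigma_-)$ with $\tilde\Lambda:=2^{D_{\max,\set{F}}(\tilde\sigma\|\tilde\sigma_+)}$, the output $\mE(\rho)=\xi(t^\star)$ lies on the same line through $\tilde\sigma$, and condition~\eqref{error first} guarantees $t^\star\ge a$, ensuring that $\xi$ extrapolates past $\tilde\sigma_+$ toward $\tilde\sigma$. Condition~\eqref{error second} rewritten as $\beta\ge(\tilde\Lambda a-1)/(\tilde\Lambda-1)$ will be exactly the algebraic requirement that (a) forces $\tau_1,\tau_2\ge 0$ and (b) places $\xi(t^\star)$ within trace distance $2\epsilon_1$ of $\tilde\sigma$; a triangle inequality with $\|\tilde\sigma-\sigma\|_1\le 2\epsilon_2$ then closes the bound $\frac12\|\mE(\rho)-\sigma\|_1\le\epsilon_1+\epsilon_2$. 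Case (iii)(b) is the singular limit $\tilde\Delta\to+\infty$ in which $\tilde\sigma_-$ evaporates; the additional hypothesis~\eqref{eq:max=min} collapses $[\beta,a]$ to a single value, and the same construction closes up to a simpler channel involving only $\tilde\sigma_+$ and a chosen free preparation, so that essentially no extrapolation is needed.

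The main obstacle I anticipate is the joint check in (iii)(a) that $\tau_1,\tau_2\ge 0$ while $\|\xi(t^\star)-\tilde\sigma\|_1$ satisfies the $\epsilon_1$-bound, all under the very same pair of scalar conditions~\eqref{error first} and~\eqref{error second}. The explicit formulas for $\tau_1,\tau_2$ in terms of $\tilde\sigma_+,\tilde\sigma_-,a,\beta$ are routine to write down, but verifying operator positivity and the tight trace-norm bound requires careful bookkeeping of the scalar coefficients in the decompositions $\tilde\sigma_+=\tilde\Lambda^{-1}\tilde\sigma+(1-\tilde\Lambda^{-1})\tilde\sigma_-$ and $\tilde\sigma=\tilde\Lambda\tilde\sigma_+-(\tilde\Lambda-1)\tilde\sigma_-$; this is expected to be the technical heart of the argument.
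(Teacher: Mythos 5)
Your overall architecture --- a test-and-prepare channel built from a minimax-optimal effect $P^*$ together with the collinear quadruple $\tilde\sigma,\tilde\sigma_0,\tilde\sigma_+,\tilde\sigma_-$ --- is the same as the paper's, and your treatment of cases (i) and (ii) is correct. The gap is in how you anchor the affine family $\xi(s)=s\tau_1+(1-s)\tau_2$ in case (iii.a). By pinning $\xi(a)=\tilde\sigma_+$ and $\xi(\beta)=\tilde\sigma_-$ you do make $\mE(\set{F})\subseteq\set{F}$ automatic, but you thereby give up the two properties you then hope to recover from conditions~\eqref{error first} and~\eqref{error second}, namely positivity of $\tau_1,\tau_2$ and closeness of $\mE(\rho)$ to $\tilde\sigma$. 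Concretely, with $\Lambda:=2^{D_{\max,\set{F}}(\tilde\sigma\|\tilde\sigma_+)}$ one has $\tilde\sigma=\tilde\sigma_-+\Lambda(\tilde\sigma_+-\tilde\sigma_-)$, while your anchoring forces $\tau_1=\xi(1)=\tilde\sigma_-+\tfrac{1-\beta}{a-\beta}(\tilde\sigma_+-\tilde\sigma_-)$. These coincide only if $\beta=\tfrac{\Lambda a-1}{\Lambda-1}$, i.e.\ only if~\eqref{error second} holds \emph{with equality}; since $s\mapsto\tfrac{1-s}{a-s}$ is strictly increasing for $s<a<1$, the inequality $\beta\ge\tfrac{\Lambda a-1}{\Lambda-1}$ that the hypotheses actually supply pushes $\tau_1$ strictly \emph{past} $\tilde\sigma$ along the line. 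Already in the exact case $\epsilon_1=0$, where $\Tr{P^*\rho}=1$ and hence $\mE(\rho)=\tau_1$, the output misses $\tilde\sigma$ by a fixed nonzero trace distance, so the claimed conclusion fails; and when $\beta$ approaches $a$ the coefficient $\tfrac{1-\beta}{a-\beta}$ blows up, so $\tau_1$ (and likewise $\tau_2=\xi(0)$, extrapolated beyond $\tilde\sigma_-$) need not even be positive semidefinite. No amount of bookkeeping rescues this, because output accuracy would require the reverse inequality on $\beta$.

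The repair is to anchor the other two points: take $\tau_1:=\tilde\sigma$ and $\tau_2:=\tfrac{M\tilde\sigma_+-\tilde\sigma}{M-1}$ with $M:=1/a$, so that $\xi(1)=\tilde\sigma$ and $\xi(a)=\tilde\sigma_+$. Then condition~\eqref{error first}, i.e.\ $M\ge 2^{D_{\max}(\tilde\sigma\|\tilde\sigma_+)}$, yields $\tau_2\ge0$, and $\N{\mE(\rho)-\tilde\sigma}_1=(1-\Tr{P^*\rho})\N{\tau_2-\tilde\sigma}_1\le2\epsilon_1$ is immediate, with the triangle inequality against $\N{\tilde\sigma-\sigma}_1\le2\epsilon_2$ closing the bound. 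Free preservation is now the part requiring work: $\xi(\Tr{P^*\varphi})$ ranges over $\xi([\beta,a])$ with endpoint $\xi(a)=\tilde\sigma_+$, and condition~\eqref{error second} is precisely the statement that $\beta\ge\tfrac{\Lambda a-1}{\Lambda-1}=\xi^{-1}(\tilde\sigma_-)$, i.e.\ that the other endpoint has not crossed $\tilde\sigma_-$; convexity of $\set{F}$ then finishes exactly as in your argument. Your reading of case (iii.b) --- condition~\eqref{eq:max=min} gives $\beta=a$, so every free state is sent to $\tilde\sigma_+$ --- is fine once the map is fixed in this way.
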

\end{framed}

\begin{remark}
	As discussed in Section~\ref{sec:convex-dec}, the assumption $\mathfrak{D}^{\epsilon_2}_{\max}(\sigma)>0$ in case~(iii.a) guarantees that also $D_{\max,\set{F}}(\tilde{\sigma}\|\tilde\sigma_{+})>0$, so that the denominator appearing in the right-hand side of~\eqref{error second} is strictly greater than zero. Also, since $\mathfrak{D}^{\epsilon_1}_{h}(\rho)\ge-\log(1-\epsilon_1)$ independently of $\rho$, the parameter $\epsilon_1$ can be modulated so to compensate, to some extent, eventual lack of resource in the initial state.
\end{remark}

Condition~\eqref{eq:max=min} is stronger than condition~\eqref{error second}, in the sense that if the former is satisfied, the latter is also satisfied. This is because, by multiplying both sides by $2^{D_{\max,\set F}(\tilde{\sigma}\|\tilde\sigma_{+})}-1>0$ (see preceding remark), condition~\eqref{error second} becomes 
	\[
	2^{D_{\max,\set F}(\tilde\sigma\|\tilde\sigma_{+})-\max_{\omega \in \set{F}}{D}_{h}^{\epsilon_1}(\rho\|\omega)}-2^{-\max_{\omega \in \set{F}}{D}_{h}^{\epsilon_1}(\rho\|\omega)}\ge 	2^{D_{\max,\set F}(\tilde\sigma\|\tilde\sigma_{+})-\min_{\omega \in \set{F}}{D}_{h}^{\epsilon_1}(\rho\|\omega)}-1\;,
	\]
and this, if $\max_{\omega \in \set{F}}{D}_{h}^{\epsilon_1}(\rho\|\omega)=\min_{\omega \in \set{F}}{D}_{h}^{\epsilon_1}(\rho\|\omega)$, becomes equivalent to
	\[
	2^{-\max_{\omega \in \set{F}}{D}_{h}^{\epsilon_1}(\rho\|\omega)}\le 1\;,
	\]
which is always trivially satisfied, due to the non-negativity of the hypothesis testing relative entropy. In other words, we have shown the following:

\begin{framed}
\begin{corollary}\label{coro:constant-overlap}
	Given a state $\rho\in\set{D}(\Cm)$, suppose that $\max_{\omega \in \set{F}}D^{\epsilon_1}_{h}(\rho\|\omega)=\min_{\omega \in \set{F}}D^{\epsilon_1}_{h}(\rho\|\omega)$. Then, for any $\sigma$,
	\[
	\mathfrak{D}^{\epsilon_1}_{h}(\rho)	\ge		 \mathfrak{D}^{\epsilon_2}_{\max}(\sigma)\qquad\implies\qquad \rho\succeq_{(\epsilon_1+\epsilon_2)}\sigma\;.
	\]
\end{corollary}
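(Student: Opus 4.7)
The plan is to derive the corollary as an immediate case analysis of Theorem~\ref{th:3}, since the hypothesis $\max_{\omega\in\set{F}}D^{\epsilon_1}_h(\rho\|\omega)=\min_{\omega\in\set{F}}D^{\epsilon_1}_h(\rho\|\omega)$ is literally condition~\eqref{eq:max=min} in the theorem, and the implication ``\eqref{eq:max=min}$\Rightarrow$\eqref{error second}'' has already been verified in the paragraph preceding the corollary statement.

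First I would dispose of the two degenerate branches covered by cases~(i) and~(ii) of Theorem~\ref{th:3}. If $\mathfrak{D}^{\epsilon_1}_h(\rho)=+\infty$, case~(i) gives $\rho\succeq_{\epsilon_1}\sigma$; if $\mathfrak{D}^{\epsilon_2}_{\max}(\sigma)=0$, case~(ii) gives $\rho\succeq_{\epsilon_2}\sigma$. The monotonicity of $\succeq_\epsilon$ in $\epsilon$, which is immediate from Definition~\ref{def:preorder}, then upgrades either conclusion to $\rho\succeq_{\epsilon_1+\epsilon_2}\sigma$.

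In the complementary regime both $\mathfrak{D}^{\epsilon_1}_h(\rho)<+\infty$ and $\mathfrak{D}^{\epsilon_2}_{\max}(\sigma)>0$, placing us inside case~(iii) of Theorem~\ref{th:3}. The corollary's hypothesis $\mathfrak{D}^{\epsilon_1}_h(\rho)\ge\mathfrak{D}^{\epsilon_2}_{\max}(\sigma)$ is exactly~\eqref{error first}. I would then further split according to whether $D_{\max,\set{F}}(\tilde\sigma\|\tilde\sigma_+)$ is finite or infinite: sub-case~(iii.b) requires~\eqref{error first} together with~\eqref{eq:max=min}, both of which hold by assumption; sub-case~(iii.a) requires~\eqref{error first} together with~\eqref{error second}, and the latter is secured by the chain of algebraic equivalences displayed just after Theorem~\ref{th:3}, which showed that~\eqref{eq:max=min} implies~\eqref{error second}. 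Either way Theorem~\ref{th:3} outputs $\rho\succeq_{\epsilon_1+\epsilon_2}\sigma$, which is the conclusion.

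There is no substantive obstacle here; the only care needed is matching the labels of the theorem's sub-cases to the single hypothesis of the corollary, and noticing that the ``flatness'' assumption $\max=\min$ unifies sub-cases~(iii.a) and~(iii.b) under one argument. All of the analytical content — in particular the convex-decomposition construction of Section~\ref{sec:convex-dec} and the existence-of-morphism argument inside Theorem~\ref{th:3} — has already been absorbed upstream and does not need to be reopened.
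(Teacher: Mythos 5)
Your proposal is correct and follows essentially the same route as the paper: the paper derives the corollary precisely by observing that the flatness hypothesis is condition~\eqref{eq:max=min}, which it shows (via the same algebraic manipulation you cite) implies condition~\eqref{error second}, so that together with~\eqref{error first} both sub-cases of case~(iii) of Theorem~\ref{th:3} are covered, the degenerate cases being handled by~(i) and~(ii). Your explicit treatment of the degenerate branches and of the monotonicity of $\succeq_\epsilon$ in $\epsilon$ is a slightly more careful bookkeeping of what the paper leaves implicit, but it is not a different argument.
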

\end{framed}

Corollary~\ref{coro:constant-overlap}, for rank-one $\rho$ and $\epsilon_1=0$, recovers Theorem~2 in Ref.~\cite{Liu-Bu-Takagi-PRL}.

\begin{proof}[Proof of Theorem~\ref{th:3}]
	Case~(i) is easily proved as follows. The condition $\mathfrak{D}^{\epsilon_1}_{h}(\rho)=+\infty$ guarantees the existence of an operator $P\in\mathsf{P}^{\epsilon_1}(\rho)$ such that $\Tr{P\ \omega}=0$ for all $\omega\in\set{F}$. Hence, by constructing a CPTP map as follows:
	\[
	\mE(\cdot):=\Tr{P\ \cdot}\sigma+\Tr{(\openone-P)\ \cdot}\varphi\;,
	\]
	where $\varphi$ is an arbitrarily fixed element of $\set{F}$, we see that $\mE$ maps all free states to $\varphi$, so that $\mE(\set{F})\subseteq\set{F}$, while $\N{\mE(\rho)-\sigma}_1\le 2(1-\Tr{P \rho})\le 2\epsilon_1$.
	
	Case~(ii) follows trivially from the fact that condition $\mathfrak{D}^{\epsilon_2}_{\max}(\sigma)=0$ guarantees the existence of at least one free state which is $\epsilon_2$-close to $\sigma$. Hence, the sought resource morphism is trivially given by the CPTP map that prepares any one such states.
	
	Now we move on to case~(iii). We begin by looking at condition~\eqref{error first}, which is the same in both~(iii.a) and~(iii.b), and rewrite it as follows
	\begin{align}\label{error first rewritten}
	-\log\Tr{P^*\ \omega^*}\ge D_{\max}(\tilde\sigma\|\tilde\sigma_+)\;,
	\end{align}
	where
	\begin{itemize}
		\item the operators $P^*\in\set{P}^{\epsilon_1}(\rho)$ and $\omega^*\in\set{F}$ are chosen to satisfy:
		\begin{align}
		\Tr{P^*\ \omega^*}&=2^{-\mathfrak{D}_h^{\epsilon_1}(\rho) }\\
		&:=\max_{\omega \in \set{F}}\min_{P\in\mathsf{P}^{\epsilon_1}(\rho)}\Tr{P\ \omega}\\
		&=\min_{P\in\mathsf{P}^{\epsilon_1}(\rho)}\max_{\omega \in \set{F}}\Tr{P\ \omega}\label{eq:minimax}\;;
		\end{align}
		the equality in the third line follows from the minimax theorem, for example, in Kakutani's formulation~\cite{kakutani1941,FRENK200446}, whose hypotheses are satisfied since both optimizations range over convex sets and the functional to be optimized is linear, and hence both convex and concave, in its arguments;
		\item the operators $\tilde{\sigma}\in\set{D}(\Cm)$ and $\tilde\sigma_+\in\set{F}$ are chosen so to satisfy:
		\begin{align}
		D_{\max}(\tilde{\sigma}\|\tilde\sigma_+)&=\mathfrak{D}_{\max}^{\epsilon_2}(\sigma)\\
		&:=\min_{\omega \in \set{F}}\min_{\sigma'\in\mathsf{B}^{\epsilon}(\sigma)}D_{\max}(\sigma'\|\omega)\\
		&=\min_{\sigma'\in\mathsf{B}^{\epsilon}(\sigma)}\min_{\omega \in \set{F}}D_{\max}(\sigma'\|\omega)\\
		&=\mathfrak{D}_{\max}(\tilde\sigma)\label{eq:robu-smooth}\;,
		\end{align}
		that is, $\tilde\sigma_+$ achieves the generalized free fraction for $\tilde{\sigma}$ as in Eq.~\eqref{eq:fraction}, namely:
		\begin{align}\label{eq:opt-decomp-sigma-tilde}
		\tilde\sigma_{+}=\nR_{g}(\tilde{\sigma} )\tilde{\sigma} +(1-\nR_{g}(\tilde{\sigma} ))\tilde\sigma_0\;.
		\end{align}
	\end{itemize}
	
	In Ref.~\cite{Buscemi--Sutter--Tomamichel}, condition~(\ref{error first}) alone is shown to be sufficient for the existence of a test-and-prepare CPTP linear map $\mE$ such that $\N{\mE(\rho)-\sigma}_1\le 2(\epsilon_1+\epsilon_2)$ and $\mE(\omega^*)=\tilde\sigma_+$. Such a map is explicitly given as follows:
	\begin{align}\label{eq:the-operation}
	\mathcal{E}(\cdot)=\Tr{P^*\ \cdot}\tilde{\sigma}+\Tr{(\openone-P^{*})\ \cdot}\frac{M\tilde\sigma_{+}-\tilde{\sigma}}{M-1}\;,
	\end{align}
	where, for convenience of notation, we have put $M:=1/\Tr{P^*\ {\omega^*}}=2^{\mathfrak{D}^{\epsilon_1}_{h}(\rho)}$. Without loss of generality, we can assume that $1<M<+\infty$ for the following reasons. First of all, notice that the assumption $\mathfrak{D}^{\epsilon_1}_{h}(\rho)<+\infty$ implies $M< +\infty$. Moreover, we can also assume that $\Tr{P^*\ {\omega^*}}<1$, that is $M>1$, otherwise $\mathfrak{D}^{\epsilon_1}_{h}(\rho)=0$ and, by~\eqref{error first}, $\mathfrak{D}^{\epsilon_2}_{\max}(\sigma)=0$, thus making the situation trivial.
	
	As shown in~\cite{Buscemi--Sutter--Tomamichel}, the above map is CPTP; in order to show that it is a resource morphism, we only need to show that $\mE(\set{F})\subseteq\set{F}$. To this end, let us assume that the input to $\mE$ is an arbitrary $\varphi\in\set{F}$. We need to show that $\mE(\varphi)\in\set{F}$.  By arranging terms, we obtain,
	\begin{align}\label{rewrite}
	\mathcal{E}(\varphi)=\left(1-\frac{1-\Tr{P^{*}\ \varphi}}{1-\Tr{P^{*}\ {\omega^*}}}\right)\tilde{\sigma}+ \left(\frac{1-\Tr{P^{*}\ \varphi}}{1-\Tr{P^{*}\ {\omega^*}}}\right)\tilde\sigma_{+}\;.
	\end{align}
	Again for convenience of notation, let us put $t:=\frac{1-\Tr{P^{*}\ \varphi}}{1-\Tr{P^{*}\ {\omega^*}}}$ and $R:=\nR_{g}(\tilde{\sigma} )$. We now recall the optimal decomposition~(\ref{eq:opt-decomp-sigma-tilde}): by inserting it into~(\ref{rewrite}) and rearranging terms once more, we arrive at
	\begin{align}\label{eq:above-equation}
	\mathcal{E}(\varphi)=(1-t+tR)\tilde{\sigma} +(t-tR)\tilde{\sigma}_0\;.
	\end{align}
	The above relation tells us that $\mathcal{E}(\varphi)$ lies somewhere on the affine line passing through both $\tilde{\sigma}$ and $\tilde{\sigma}_0$. Therefore, in order to have $\mathcal{E}(\varphi)\in\set{F}$, the coefficient $(1-t+tR)$ weighing $\tilde{\sigma}$ 
	must be carefully bounded both from above and from below, so that $\mathcal{E}(\varphi)$ is neither too close to $\tilde{\sigma}$ nor too close to $\tilde{\sigma}_0$, in which case it could end up lying outside $\set{F}$ (see Fig.~\ref{fig:generalrobustness} for a schematic picture).
	
	The upper bound is computed as follows. Since the free fraction is exactly defined as the \textit{maximum} weight of $\tilde{\sigma}$ so that a convex mixture with $\tilde{\sigma}_0$ lies in $\set{F}$, we want to show that the weight of $\tilde{\sigma}$ in~\eqref{eq:above-equation} does not exceed $R$, that is,
	\[
	1-t+tR\le R\;,
	\]
	or, equivalently,
	\begin{equation}\;\label{inproof}
	1-t\le (1-t)R\;.
	\end{equation}
	Since, starting from Eq.~(\ref{eq:minimax}),
	\begin{align*}
	\Tr{P^*\ {\omega^*}}&=\min_{P\in\mathsf{P}^{\epsilon_1}(\rho)}\max_{\omega \in \set{F}}\Tr{P\ \omega}\\
	&=\max_{\omega \in \set{F}}\Tr{P^*\ \omega}\\
	&\ge \Tr{P^*\ \varphi}\ge 0\;,
	\end{align*}
	we see that $t\ge 1$, that is, $1-t\le 0$, and inequality~(\ref{inproof}) automatically holds for any $R\in [0,1]$, without the need to invoke any extra condition.

	Hence, condition~(\ref{error second}) or condition~(\ref{eq:max=min}) are only required to obtain the correct lower bound, that is, to prevent that $\mE(\varphi)$ crosses the border of $\set{F}$ when approaching $\tilde{\sigma}_0$. In order to derive the lower bound, we resort to the construction introduced in Eq.~(\ref{robustness ratio}) and depicted in Fig.~\ref{fig:generalrobustness}. Once a decomposition achieving the generalized free fraction of $\tilde{\sigma}$ is found, $\tilde\sigma_{-}$ is the state on the boundary of $\set{F}$, which is ``antipodal'' with respect to $\tilde\sigma_+$. If we get past it, we end up outside $\set{F}$: we need to make sure this does not happen.

	We begin by assuming that $D_{\max,\set{F}}(\tilde{\sigma}\|\tilde\sigma_{+})<+\infty$, that is, $\tilde\sigma_{+}\neq\tilde\sigma_{-}$. (We recall that $D_{\max,\set{F}}(\tilde{\sigma}\|\tilde\sigma_{+})>0$ is a consequence of the assumption $\mathfrak{D}^{\epsilon_2}_{\max}(\sigma)>0$.) In this case, we need to impose that
	\begin{equation}\label{eq:to-continue}
	1-t+tR \ge  \frac{R\,2^{D_{\max,\set{F}}(\tilde{\sigma}\|\tilde\sigma_{+})}-1}{2^{D_{\max,\set{F}}(\tilde{\sigma}\|\tilde\sigma_{+})}-1}\;.
	\end{equation}
	Before proceeding, we notice that the above inequality, if satisfied, implies in particular $1-t+tR\ge 0$, because $R\,2^{D_{\max,\set{F}}(\tilde{\sigma}\|\tilde\sigma_{+})}=2^{D_{\max,\set{F}}(\tilde{\sigma}\|\tilde\sigma_{+})-D_{\max}(\tilde{\sigma}\|\tilde\sigma_{+})}\ge 1$.
	
	Condition~\eqref{eq:to-continue}, after writing $t$ explicitly again, reads as follows:
	\[
	1-\frac{1-\Tr{P^{*}\ \varphi}}{1-\Tr{P^{*}\ {\omega^*}}}+R\left(\frac{1-\Tr{P^{*}\ \varphi}}{1-\Tr{P^{*}\ {\omega^*}}}\right)\ge \frac{R\,2^{D_{\max,\set{F}}(\tilde{\sigma}\|\tilde\sigma_{+})}-1}{2^{D_{\max,\set{F}}(\tilde{\sigma}\|\tilde\sigma_{+})}-1}\;.
	\]
	Since we are assuming that $\Tr{P^*\ {\omega^*}}<1$,  multiplying both sides by $1-\Tr{P^*\ {\omega^*}}$ does not change the inequality, so we obtain the equivalent condition: 
	\[
	(1-R)\Tr{P^{*}\ \varphi}
	\ge \frac{R\,2^{D_{\max,\set{F}}(\tilde{\sigma}\|\tilde\sigma_{+})}-1}{2^{D_{\max,\set{F}}(\tilde{\sigma}\|\tilde\sigma_{+})}-1}(1-\Tr{P^{*}\ {\omega^*}})+\Tr{P^{*}\ {\omega^*}}-R\;.\]
	After rearranging the right-hand side,  we arrive at
	\[
	(1-R)\Tr{P^{*}\ \varphi}
	\ge  (1-R)\frac{\Tr{P^{*}\ {\omega^*}}2^{D_{\max, F}(\tilde{\sigma}\|\tilde\sigma_{+})}-1}{2^{D_{\max, F}(\tilde{\sigma}\|\tilde\sigma_{+})}-1}\;.
	\]
	Since $R<1$ (because we assumed that $\tilde\sigma\notin\set{F}$, that is, $\mathfrak{D}_{\max}(\tilde\sigma)>0$), we can divide both sides by $(1-R)$ and obtain
	\begin{align}\label{eq:cond-rewritten}
	\Tr{P^{*}\ \varphi}\ge \frac{\frac{1}{M}2^{D_{\max, F}(\tilde{\sigma}\|\tilde\sigma_{+})}-1}{2^{D_{\max, F}(\tilde{\sigma}\|\tilde\sigma_{+})}-1}\;.
	\end{align}
	
	The above condition must be satisfied for any $\varphi\in\set{F}$. Hence, what we really want is a lower bound on $\min_{\varphi \in \set{F}} \Tr{P^{*}\ \varphi}$. Noticing that
	\begin{align}
	\min_{\varphi \in \set{F}} \Tr{P^{*}\ \varphi}&\ge\min_{P\in\mathsf{P}^{\epsilon_1}(\rho)}\min_{\varphi \in \set{F}} \Tr{P\ \varphi}\\
	&=\min_{\omega \in \set{F}}\min_{P\in\mathsf{P}^{\epsilon_1}(\rho)} \Tr{P\ \omega}\\
	&=2^{-\max_{\omega \in \set{F}}D^{\epsilon_1}_{h}(\rho\|\omega)}\;,
	\end{align}
	condition~(\ref{eq:cond-rewritten}) holds whenever the following, stricter condition holds, that is,
	\begin{align*}
	2^{-\max_{\omega \in \set{F}}D^{\epsilon_1}_{h}(\rho\|\omega)}
	&\ge\frac{\frac{1}{M}2^{D_{\max, F}(\tilde{\sigma}\|\tilde\sigma_{+})}-1}{2^{D_{\max, F}(\tilde{\sigma}\|\tilde\sigma_{+})}-1}\\
	&= \frac{2^{D_{\max,\set
				F}(\tilde{\sigma}\|\tilde\sigma_{+})-\min_{\omega \in \set{F}}D^{\epsilon_1}_{h}(\rho\|\omega)}-1}{2^{D_{\max,\set
				F}(\tilde{\sigma}\|\tilde\sigma_{+})}-1} \;.
	\end{align*}
	Hence, condition~(\ref{error second}) guarantees that $\mE(\varphi)\in\set{F}$ for any $\varphi\in\set{F}$, that is, that the operation $\mE$ defined in~\eqref{eq:the-operation} is a valid resource morphism.
	
	Let us finally consider the case in which $D_{\max,\set{F}}(\tilde{\sigma}\|\tilde\sigma_{+})=+\infty$, that is, $\tilde\sigma_{+}=\tilde\sigma_{-}$. In this case, lower and upper bounds have to coincide, so that the map defined in~\eqref{eq:the-operation} is a resource morphism if and only if $1-t+tR=R$. This can only happen if $R=1$ (but this is excluded because $\tilde\sigma\notin\set{F}$) or if $1-t=0$, that is, if $t=1$ independently of the input $\varphi\in\set{F}$. This is guaranteed if the operator $P^*$ in~\eqref{eq:the-operation} has the same trace on all free states, which is exactly the content of~\eqref{eq:max=min}.
\end{proof}

\bigskip A less general, but simpler, statement stemming from Theorem~\ref{th:3} is the following:

\begin{framed}
\begin{corollary}\label{coro:1}
	With the same notations of Theorem~\ref{th:3}, suppose that the following condition holds,
	\begin{align}\label{eq:stronger-cond}
	\mathfrak{D}^{\epsilon_1}_{h}(\rho)\ge D_{\max,\set{F}}(\tilde{\sigma}\|\tilde\sigma_{+})\;.
	\end{align}
	Then, $\rho \succeq_{\epsilon_1+\epsilon_2} \sigma$.
\end{corollary}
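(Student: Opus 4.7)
The plan is to reduce Corollary~\ref{coro:1} to Theorem~\ref{th:3} by verifying that the single hypothesis~\eqref{eq:stronger-cond} already implies all the conditions needed to invoke that theorem, regardless of which of its sub-cases one lands in. First I would dispose of the degenerate sub-cases. If \eqref{eq:stronger-cond} holds with an infinite right-hand side, i.e., $D_{\max,\set{F}}(\tilde\sigma\|\tilde\sigma_{+})=+\infty$, then \eqref{eq:stronger-cond} forces $\mathfrak{D}^{\epsilon_1}_{h}(\rho)=+\infty$, so Theorem~\ref{th:3}(i) immediately yields $\rho\succeq_{\epsilon_1}\sigma$ and a fortiori $\rho\succeq_{\epsilon_1+\epsilon_2}\sigma$. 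If instead $\mathfrak{D}^{\epsilon_2}_{\max}(\sigma)=0$, then case~(ii) of Theorem~\ref{th:3} applies directly and delivers $\rho\succeq_{\epsilon_2}\sigma$.

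In the remaining, generic sub-case we may assume $0<D_{\max,\set{F}}(\tilde{\sigma}\|\tilde\sigma_{+})<+\infty$ and $\mathfrak{D}^{\epsilon_2}_{\max}(\sigma)>0$, so the task reduces to checking the pair of hypotheses~\eqref{error first} and~\eqref{error second} of Theorem~\ref{th:3}(iii.a). Condition~\eqref{error first} would follow from~\eqref{eq:stronger-cond} via the elementary bound $D_{\max,\set{F}}(\tilde\sigma\|\tilde\sigma_+)\ge D_{\max}(\tilde\sigma\|\tilde\sigma_+)$, the right-hand side being precisely $\mathfrak{D}^{\epsilon_2}_{\max}(\sigma)$ by the choice of $\tilde\sigma$ and $\tilde\sigma_+$ stipulated in Theorem~\ref{th:3}. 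Condition~\eqref{error second} would follow from the observation that, under~\eqref{eq:stronger-cond}, the exponent $D_{\max,\set F}(\tilde\sigma\|\tilde\sigma_{+})-\mathfrak{D}^{\epsilon_1}_{h}(\rho)$ is non-positive, so the numerator on the right-hand side of~\eqref{error second} is $\le 0$ while its denominator is strictly positive; the right-hand side is therefore non-positive, whereas the left-hand side is manifestly non-negative, making~\eqref{error second} trivially true.

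There is no real obstacle here: the corollary is essentially the sharpening remark that a single $\set{F}$-max-divergence inequality already dominates both the ordinary monotone inequality~\eqref{error first} and the auxiliary robustness-type inequality~\eqref{error second} required by Theorem~\ref{th:3}(iii.a). The only care needed is the book-keeping of the boundary cases in which $\mathfrak{D}^{\epsilon_1}_{h}(\rho)$ or $D_{\max,\set{F}}(\tilde\sigma\|\tilde\sigma_{+})$ is infinite, or $\mathfrak{D}^{\epsilon_2}_{\max}(\sigma)$ vanishes so that the denominator in~\eqref{error second} would be ill-defined; these are precisely the cases that get channeled into parts~(i) and~(ii) of Theorem~\ref{th:3}.
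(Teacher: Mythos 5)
Your proposal is correct and follows essentially the same route as the paper's own proof: dispatch the degenerate cases to parts (i) and (ii) of Theorem~\ref{th:3}, then in the generic case derive~\eqref{error first} from $D_{\max,\set{F}}(\tilde\sigma\|\tilde\sigma_+)\ge D_{\max}(\tilde\sigma\|\tilde\sigma_+)=\mathfrak{D}^{\epsilon_2}_{\max}(\sigma)$ and observe that~\eqref{eq:stronger-cond} makes the right-hand side of~\eqref{error second} non-positive while its left-hand side is non-negative. The only cosmetic difference is that the paper anchors the second degenerate case on $D_{\max,\set{F}}(\tilde\sigma\|\tilde\sigma_+)=0$ (which forces $\tilde\sigma\in\set{F}$ and hence $\mathfrak{D}^{\epsilon_2}_{\max}(\sigma)=0$) rather than directly on $\mathfrak{D}^{\epsilon_2}_{\max}(\sigma)=0$, but the two bookkeepings are equivalent.
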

\end{framed}

\begin{proof}
	
	Assuming~\eqref{eq:stronger-cond}, if $D_{\max,\set{F}}(\tilde{\sigma}\|\tilde\sigma_{+})=+\infty$, then also $\mathfrak{D}^{\epsilon_1}_{h}(\rho)=+\infty$. In such a case, we know from Theorem~\ref{th:3}, case~(i), that $\rho\succeq_{\epsilon_1}\sigma$, which of course implies also $\rho\succeq_{(\epsilon_1+\epsilon_2)}\sigma$.
	
	On the other hand, if $D_{\max,\set{F}}(\tilde{\sigma}\|\tilde\sigma_{+})=0$, we know that $\tilde\sigma\in\set{F}$, so that, in fact, $\mathfrak{D}^{\epsilon_2}_{\max}(\sigma)=0$. In other words, we are in case~(ii) of Theorem~\ref{th:3}, and again $\rho\succeq_{(\epsilon_1+\epsilon_2)}\sigma$ holds.
	
	We are hence left to consider the case 
	\begin{align}\label{eq:stronger-cond2}
	+\infty>\mathfrak{D}^{\epsilon_1}_{h}(\rho)\ge D_{\max,\set{F}}(\tilde{\sigma}\|\tilde\sigma_{+})>0.
	\end{align}
	We show that condition~\eqref{eq:stronger-cond2} alone implies both conditions~\eqref{error first} and~\eqref{error second} of case~(iii.a) in Theorem~\ref{th:3}.
	
	Since by definition $D_{\max,\set{F}}(\tilde{\sigma}\|\tilde\sigma_{+})\ge D_{\max}(\tilde{\sigma}\|\tilde\sigma_{+})=\mathfrak{D}^{\epsilon_2}_{\max}(\sigma)$, we immediately see that condition~\eqref{eq:stronger-cond2} implies condition~\eqref{error first}. Hence, we only need to show that also condition~\eqref{error second} is implied. In fact, we can show that~\eqref{eq:stronger-cond2} implies a condition that is even stronger than~(\ref{error second}). Such a condition is the following:
	\[
	0\ge \frac{2^{D_{\max,\set
				F}(\tilde{\sigma}\|\tilde\sigma_{+})-\mathfrak{D}^{\epsilon_1}_{h}(\rho)}-1}{2^{D_{\max,\set
				F}(\tilde{\sigma}\|\tilde\sigma_{+})}-1} \;.
	\]
	If the above is satisfied, also~\eqref{error second} is satisfied, and we can conclude that $\rho\succeq_{(\epsilon_1+\epsilon_2)}\sigma$. The above inequality is satisfied because, as a consequence of $D_{\max,\set F}(\tilde{\sigma}\|\tilde\sigma_{+})>0$, the denominator in the right-hand side is strictly positive, so that the above inequality is equivalent to
	\[
	1\ge2^{D_{\max,\set
			F}(\tilde{\sigma}\|\tilde\sigma_{+})-\mathfrak{D}^{\epsilon_1}_{h}(\rho)}\;,
	\]
	which is satisfied if and only if condition~(\ref{eq:stronger-cond2}) is satisfied.
\end{proof}

A merit of Corollary~\ref{coro:1} is to provide a simple compact sufficient condition, free of supplementary \textit{caveat} like condition~(\ref{error second}), which is difficult to interpret operationally. However, the right-hand side of~\eqref{eq:stronger-cond} is not yet a valid resource monotone. The following result fills the gap.

\begin{framed}
\begin{theorem}\label{th:freecase}
	Given $\rho,\sigma\in\set{D}(\Cm)$ and $\epsilon_1, \epsilon_2 \in [0,1]$, if
	\begin{align}\label{eq:error-conditions}
\mathfrak{D}^{\epsilon_1}_{h}(\rho)	\ge		 \mathfrak{D}_{\max,\set{F}}^{\epsilon_2}(\sigma) \;,
	\end{align}
	then $\rho\succeq_{(\epsilon_1+\epsilon_2)} \sigma$.
\end{theorem}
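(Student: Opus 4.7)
The plan is to adapt the construction used in the proof of Theorem~\ref{th:3}, but replacing the witnesses that achieve $\mathfrak{D}^{\epsilon_2}_{\max}(\sigma)$ by ones attaining the (generally larger) quantity $\mathfrak{D}^{\epsilon_2}_{\max,\set{F}}(\sigma)$. The payoff of this exchange is that the ``antipodal'' state of the optimal decomposition is now \emph{automatically} a free state, which is precisely what is required to guarantee $\set{F}$-preservation of the constructed map without having to invoke a side condition like~\eqref{error second} or~\eqref{eq:max=min}.

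First I would dispose of degenerate cases. If $\mathfrak{D}^{\epsilon_2}_{\max,\set{F}}(\sigma) = 0$, then by definition there is a free state $\epsilon_2$-close to $\sigma$, so a prepare-and-replace channel does the job; and if $\mathfrak{D}^{\epsilon_1}_{h}(\rho) = +\infty$, Theorem~\ref{th:3}(i) already gives the stronger conclusion $\rho \succeq_{\epsilon_1}\sigma$. In the remaining case $0 < \mathfrak{D}^{\epsilon_2}_{\max,\set{F}}(\sigma) \le \mathfrak{D}^{\epsilon_1}_{h}(\rho) < +\infty$, I would pick $\hat\sigma \in \mathsf{B}^{\epsilon_2}(\sigma)$ and $\hat\omega \in \set{F}$ attaining $D_{\max,\set{F}}(\hat\sigma\|\hat\omega) = \mathfrak{D}^{\epsilon_2}_{\max,\set{F}}(\sigma) =: \log\lambda$; by the closure of $\set{F}$ and Definition~\ref{def:max-div-resourcefulness}, this produces a free state $\hat\omega' \in \set{F}$ with $\hat\omega = \frac{1}{\lambda}\hat\sigma + \frac{\lambda-1}{\lambda}\hat\omega'$, playing the role formerly reserved to $\tilde\sigma_0$ but now lying inside $\set{F}$. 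I would then pick a saddle point $(P^*,\omega^*)$ for $\mathfrak{D}^{\epsilon_1}_{h}(\rho)$, set $M := 2^{\mathfrak{D}^{\epsilon_1}_{h}(\rho)}$, and define $\mE$ exactly as in Eq.~\eqref{eq:the-operation} with $\tilde\sigma = \hat\sigma$ and $\tilde\sigma_{+} = \hat\omega$.

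The CPTP-ness of $\mE$ and the closeness bound $\frac{1}{2}\N{\mE(\rho) - \sigma}_1 \le \epsilon_1 + \epsilon_2$ follow verbatim from the argument of~\cite{Buscemi--Sutter--Tomamichel}: the inequality $D_{\max,\set{F}} \ge D_{\max}$ combined with the hypothesis ensures $M\hat\omega \ge \hat\sigma$, and the trace-norm estimate is obtained by the same triangle inequality as before. The main obstacle, and the genuinely new ingredient over Theorem~\ref{th:3}, is verifying $\mE(\set{F}) \subseteq \set{F}$. For this, I would substitute $\hat\sigma = \lambda\hat\omega - (\lambda-1)\hat\omega'$ into the expression for $\mE(\varphi)$, collect terms, and arrive at
\[
\mE(\varphi) = \left[a\lambda + \frac{(1-a)(M-\lambda)}{M-1}\right]\hat\omega + \frac{(\lambda-1)(1 - aM)}{M-1}\,\hat\omega',
\]
with $a := \Tr{P^*\varphi}$. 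The two coefficients visibly sum to $1$, so showing $\mE(\varphi) \in \set{F}$ reduces to checking that both are non-negative. Non-negativity of the $\hat\omega'$-coefficient amounts to $a \le 1/M$, which follows from the minimax identity $\max_{\omega \in \set{F}}\Tr{P^*\omega} = \Tr{P^*\omega^*} = 1/M$ already exploited in the proof of Theorem~\ref{th:3}. Non-negativity of the $\hat\omega$-coefficient is where the hypothesis~\eqref{eq:error-conditions}, i.e., $M \ge \lambda$, enters decisively. Hence $\mE(\varphi)$ is a convex combination of two elements of $\set{F}$, so $\mE(\varphi) \in \set{F}$, completing the argument.
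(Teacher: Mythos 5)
Your proposal is correct and follows essentially the same route as the paper's proof: the same test-and-prepare map built from the saddle point of $\mathfrak{D}^{\epsilon_1}_{h}(\rho)$ and the optimizers of $\mathfrak{D}^{\epsilon_2}_{\max,\set{F}}(\sigma)$, with $\set{F}$-preservation reduced to the two inequalities $\Tr{P^*\varphi}\le 1/M$ (from the minimax identity) and $M\ge\lambda$ (the hypothesis). The only difference is presentational: you expand $\mE(\varphi)$ directly in the affine basis $\{\hat\omega,\hat\omega'\}$ of two free states and check non-negativity of both coefficients, whereas the paper bounds the weight of $\tilde\sigma$ in the basis $\{\tilde\sigma,\tilde\sigma_0\}$ between $0$ and the free fraction $R=1/\lambda$ — an equivalent computation.
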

\end{framed}

Theorem~\ref{th:freecase}, when $\epsilon_2 = 0$ and $\sigma$ is rank-one, recovers Theorem~5 in Ref.~\cite{Liu-Bu-Takagi-PRL} (see also Corollary~17 of~\cite{Bartosz-Bu-Takagi-Liu}).

Theorem~\ref{th:3} and Theorem~\ref{th:freecase} are independent of each other. This is because, on the one hand, it is possible that $\mathfrak{D}^{\epsilon_1}_{h}(\rho)	\ge		 \mathfrak{D}_{\max}^{\epsilon_2}(\sigma)$ even though $\mathfrak{D}^{\epsilon_1}_{h}(\rho)	\centernot{\ge}		 \mathfrak{D}_{\max,\set{F}}^{\epsilon_2}(\sigma)$, so that Theorem~\ref{th:freecase} would be inconclusive. On the other hand, it is possible that $\mathfrak{D}^{\epsilon_1}_{h}(\rho)	\ge		 \mathfrak{D}_{\max,\set{F}}^{\epsilon_2}(\sigma)$ even though neither condition~\eqref{error second} nor~\eqref{eq:max=min} hold, so that Theorem~\ref{th:3} would be inconclusive. In other words, Theorem~\ref{th:3} and Theorem~\ref{th:freecase} in general apply to two different regimes and are logically independent of each other. Nonetheless, since $\tilde\sigma\in\set{B}^{\epsilon_2}(\sigma)$ and $\tilde\sigma_+\in\set{F}$, we see that 
$D_{\max,\set{F}}(\tilde{\sigma}\|\tilde\sigma_{+}) \ge\mathfrak{D}_{\max,\set{F}}^{\epsilon_2}(\sigma)$. This implies that Corollary~\ref{coro:1} above can be as well derived as a consequence of Theorem~\ref{th:freecase}. 

\begin{proof}
	We begin by noticing that, if $\mathfrak{D}^{\epsilon_1}_{h}(\rho)=+\infty$, we are back to case~(i) of Theorem~\ref{th:3}. Also, if $\mathfrak{D}_{\max,\set{F}}^{\epsilon_2}(\sigma)=0$, then also $\mathfrak{D}_{\max}^{\epsilon_2}(\sigma)=0$, and we are back to case~(ii) of Theorem~\ref{th:3}. In what follows we will hence assume that $+\infty>\mathfrak{D}^{\epsilon_1}_{h}(\rho)\ge\mathfrak{D}_{\max,\set{F}}^{\epsilon_2}(\sigma)>0$.
		
	Let us define $P^*,{\omega^*},\tilde{\sigma},\tilde{\sigma}_{+}$ as the optimizers achieving the quantities that appear in condition~(\ref{eq:error-conditions}), that is,
	\begin{align}
	&\mathfrak{D}^{\epsilon_1}_{h}(\rho)	:=\min_{\omega \in \set{F}}D_{h}^{\epsilon_1}(\rho\|\omega)=D_{h}^{\epsilon_1}(\rho\|{\omega^*})=-\log\Tr{P^*\ {\omega^*}}\\\label{opt robustness}
	&\mathfrak{D}_{\max,\set{F}}^{\epsilon_2}(\sigma):=\min_{\omega \in  \set{F}}D_{\max,\set{F}}^{\epsilon_2}(\sigma\|\omega)=D_{\max,\set{F}}(\tilde{\sigma}\|\tilde{\sigma}_{+})\;.
	\end{align}
	Notice that while $\tilde{\sigma},\tilde{\sigma}_{+}$ were used in Theorem~\ref{th:3} to denote the optimizers achieving $\mathfrak{D}_{\max}^{\epsilon_2}(\sigma)$, for the sake of this proof the same symbols are used to denote the optimizers achieving $\mathfrak{D}_{\max,\set{F}}^{\epsilon_2}(\sigma)$.
	
	Writing $M:=1/\Tr{P^*\ {\omega^*}}$, that is,
	\[\frac1M=\Tr{P^*\ {\omega^*}}=\max_{\omega \in \set{F}}\min_{P\in\mathsf{P}^{\epsilon_1}(\rho)}\Tr{P\ \omega}\;,\]
	we define the map
	\begin{align}\label{CPTP map}
	\mathcal{E}(\cdot)=\Tr{P^*\ \cdot}\tilde{\sigma}+(1-\Tr{P^{*}\ \cdot})\frac{M\tilde{\sigma}_{+}-\tilde{\sigma}}{M-1}\;.
	\end{align}
	Notice that, with respect to the map constructed in~(\ref{eq:the-operation}), the above map uses the same operator $P^*$, but prepares different states depending on the outcome. As before, moreover, it is possible to assume without loss of generality that $1<M<+\infty$.
	
	Since $D_{\max,\set{F}}(\tilde{\sigma}\|\tilde{\sigma}_{+})\ge D_{\max}(\tilde{\sigma}\|\tilde{\sigma}_{+})$, condition~(\ref{eq:error-conditions}) implies that, 
	\begin{align}\label{BST cd}
	D_{h}^{\epsilon_1}(\rho\|{\omega^*})\ge D_{\max}(\tilde{\sigma}\|\tilde{\sigma}_{+})\;,
	\end{align}
	A direct consequence of \cite{Buscemi--Sutter--Tomamichel} is that condition~(\ref{BST cd}),  together with the fact that $\tilde{\sigma} \in\mathsf{B}^{\epsilon_2}(\sigma)$, imply that the map $\mE$ defined in~(\ref{CPTP map}) is a valid CPTP map such that $\frac{1}{2}\N{\mE(\rho)-\sigma}_1\le\epsilon_1+\epsilon_2$. In what follows we show that $\mE$ is, in particular, a resource morphism.
	
	Because $\tilde{\sigma}$ and $\tilde{\sigma}_{+}$ have been chosen as the states that optimize $\mathfrak{D}_{\max,\set{F}}^{\epsilon_2}(\sigma)$, we have $D_{\max,\set{F}}(\tilde{\sigma}\|\tilde{\sigma}_{+})=\mathfrak{D}_{\max,\set{F}}(\tilde{\sigma})=-\log{\nR(\tilde{\sigma})}$. Therefore, we obtain the following decomposition of $\sigma_{+}$,
	\begin{align}\label{eq:above}
	\tilde\sigma_{+}=\nR(\tilde{\sigma} )\tilde{\sigma} +(1-\nR(\tilde{\sigma} ))\tilde\sigma_0\;,
	\end{align}
	with $\tilde\sigma_{0}\in\set{F}$. By plugging~\eqref{eq:above} in~(\ref{CPTP map}), and considering as input to the map an arbitrary free state $\varphi\in\set{F}$, we reach the following
	\begin{align}\label{eq:above-equation2}
	\mathcal{E}(\varphi)=(1-t+tR)\tilde{\sigma} +(t-tR)\tilde{\sigma}_0\;,
	\end{align}
	where, for the sake of notation, we put $t:=\frac{1-\Tr{P^{*}\ \varphi}}{1-\Tr{P^{*}\ {\omega^*}}}$ and $R:=\nR(\sigma)$. Notice that while the proof of Theorem~\ref{th:3} is obtained by working with the \textit{generalized} free fraction, in this proof we are mostly working with the free fraction.
	
	We need to show that $\mE(\varphi)\in\set{F}$, for all $\varphi\in\set{F}$. To that end, we only need to show that  the weight in front of $\tilde{\sigma}$ in~(\ref{eq:above-equation2}) is non-negative and upper bounded by $R$.
	
	In order to show that it does not exceed $R$, we proceed as follows. In the proof of Theorem~\ref{th:3}, we have shown that $t\ge 1$, so that $1-t+tR\le R$, that is, $R(t-1)\le t-1$, holds automatically for any $R\in[0,1]$.
	
	In order to show the weight of $\tilde{\sigma}$ is non-negative, it suffices to show that
	\[R\ge 1-\frac{1}{t}\;.\]
	Since $t\le \frac{1}{1-\Tr{P^{*}\ {\omega^*}}}=\frac{M}{M-1}$, we have that $1-t^{-1}\le 1-\frac{M-1}{M}$, so that the above is satisfied whenever
	\[R\ge \frac{1}{M}=\Tr{P^*\ {\omega^*}}\;,\]
	that is to say
	\[\nR(\tilde{\sigma} )=2^{-\mathfrak{D}_{\max,\set{F}}^{\epsilon_2}(\sigma)}\ge \Tr{P^*\ {\omega^*}}=2^{-\mathfrak{D}^{\epsilon_1}_{h}(\rho)	}\;.\]
\end{proof}

\section{Applications and examples}\label{sec:applications}

In this section we apply Theorems~\ref{th:3} and~\ref{th:freecase} to some situations of physical interest, and show how we can not only rederive, but sometimes also strengthen, previous results.

\subsection{Singleton Resource Theories}\label{sec:singleton}

We begin this section by considering the special case of singleton resource theories, in which the set of free states $\set{F}$ comprises only one element. This scenario includes the resource theory of athermality, namely, the case in which free operations are those that preserve the thermal state of the system~\cite{Horodecki2013,brandao-et-al-2013-athermality,brandao-et-al-second-laws,buscemi-2015}, which in turns provide the backbone of the resource theory of quantum thermodynamics \cite{Gour-Jennings-Buscemi-Duan}. More generally, when the output singleton is allowed to differ from the input one, this is referred to as the resource theory of asymmetric distinguishability, whose optimal rates have been studied in~\cite{Matsumoto,Buscemi--Sutter--Tomamichel,Wang--Wilde}.

In the singleton case, the log-robustness typically is infinite, and the applicability of Theorem~\ref{th:freecase} is quite limited. On the contrary, Theorem~\ref{th:3} can still be useful, even in the case of a singleton $\set{F}$. Indeed, Theorem~\ref{th:3} reduces in the singleton case to Lemma~3.3 of~\cite{Buscemi--Sutter--Tomamichel}, which is good enough to serve as the starting point to study optimal asymptotic interconversion rates.

\begin{framed}
\begin{proposition}\label{prop:thermal}
	Consider an input system, with initial state $\rho\in\set{D}(\Cm)$ and free singleton $\set{F}=\{\gamma \}$, and an output system, with target state $\sigma\in\set{D}(\Cn)$ and free singleton $\set{F}'=\{\gamma' \}$. If the following condition holds,
	\begin{equation}\label{singleton case}
	D^{\epsilon_1}_{h}(\rho\|\gamma)\ge D^{\epsilon_2}_{\max}(\sigma\|\gamma'),
	\end{equation}
	then $\rho \succeq_{(\epsilon_1+\epsilon_2)} \sigma$.
\end{proposition}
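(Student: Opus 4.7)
The plan is to reduce Proposition~\ref{prop:thermal} to Theorem~\ref{th:3}, generalized to distinct input and output systems in the manner mentioned just before Definition~\ref{def:preorder}. The crucial simplification in the singleton case is that every inner optimization over a one-element free set collapses to evaluation at the unique free state: $\mathfrak{D}^{\epsilon_1}_{h}(\rho)=D^{\epsilon_1}_{h}(\rho\|\gamma)$ with the optimizer $\omega^{*}$ forced to equal $\gamma$, and $\mathfrak{D}^{\epsilon_2}_{\max}(\sigma)=D^{\epsilon_2}_{\max}(\sigma\|\gamma')$ with the optimizer $\tilde\sigma_{+}$ forced to equal $\gamma'$. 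Consequently, hypothesis~\eqref{singleton case} is literally condition~\eqref{error first} of Theorem~\ref{th:3}.

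First I would dispose of the two degenerate cases. If $D^{\epsilon_1}_{h}(\rho\|\gamma)=+\infty$, this is case~(i) of Theorem~\ref{th:3}. If $D^{\epsilon_2}_{\max}(\sigma\|\gamma')=0$, then $\gamma'\in\mathsf{B}^{\epsilon_2}(\sigma)$ and the constant preparation $\mE(\cdot):=\gamma'\,\Tr{\cdot}$ is a resource morphism (it trivially sends $\gamma$ to $\gamma'$) that witnesses $\rho\succeq_{\epsilon_2}\sigma$, which is case~(ii). In the remaining regime $+\infty>D^{\epsilon_1}_{h}(\rho\|\gamma)\ge D^{\epsilon_2}_{\max}(\sigma\|\gamma')>0$, I would unpack $D_{\max,\set{F}'}(\tilde\sigma\|\gamma')$ directly from its definition: a feasible $\lambda$ would have to satisfy $(\lambda\gamma'-\tilde\sigma)/(\lambda-1)=\gamma'$, forcing $\tilde\sigma=\gamma'$; since $\tilde\sigma\neq\gamma'$ by the strict positivity of the max-divergence, no such $\lambda$ exists and $D_{\max,\set{F}'}(\tilde\sigma\|\gamma')=+\infty$. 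Hence we land in sub-case~(iii.b) of Theorem~\ref{th:3}.

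To finish, I would verify the second hypothesis~\eqref{eq:max=min} of case~(iii.b). Because $\set{F}=\{\gamma\}$ is a singleton, both $\max_{\omega\in\set{F}}D^{\epsilon_1}_{h}(\rho\|\omega)$ and $\min_{\omega\in\set{F}}D^{\epsilon_1}_{h}(\rho\|\omega)$ coincide with $D^{\epsilon_1}_{h}(\rho\|\gamma)$, so~\eqref{eq:max=min} holds for free. Invoking case~(iii.b) of Theorem~\ref{th:3} --- equivalently, Corollary~\ref{coro:constant-overlap} --- yields $\rho\succeq_{\epsilon_1+\epsilon_2}\sigma$.

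The only potential obstacle is confirming that the explicit test-and-prepare map from the proof of Theorem~\ref{th:3}, which in the present setting reads $\mE(\cdot)=\Tr{P^{*}\ \cdot}\,\tilde\sigma+\Tr{(\openone-P^{*})\ \cdot}\,(M\gamma'-\tilde\sigma)/(M-1)$ with $M=1/\Tr{P^{*}\ \gamma}$, indeed maps the input free state $\gamma$ to the output free state $\gamma'$. A one-line calculation using $\Tr{P^{*}\ \gamma}=1/M$ gives $\mE(\gamma)=\gamma'$, so the map is a valid resource morphism between the differing input and output singletons and no further argument is required --- indeed, as remarked above, the singleton case is exactly the regime in which Theorem~\ref{th:3} specializes to Lemma~3.3 of~\cite{Buscemi--Sutter--Tomamichel}.
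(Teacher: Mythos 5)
Your proposal is correct and follows essentially the same route as the paper: reduce to case~(iii.b) of Theorem~\ref{th:3} by observing that $D_{\max,\set{F}'}(\tilde\sigma\|\gamma')=+\infty$ for a singleton output free set whenever $\tilde\sigma\neq\gamma'$, and that condition~\eqref{eq:max=min} holds automatically because the input free set is also a singleton. The extra details you supply --- the explicit handling of the degenerate cases and the check that the test-and-prepare map sends $\gamma$ to $\gamma'$ --- are consistent with, and merely more explicit than, the paper's terser argument.
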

\end{framed}

\begin{proof}
We can restrict ourselves to consider only case~(iii.b) of Theorem~\ref{th:3}, because for a singleton $\set{F}'=\{\gamma' \}$, whenever $\sigma\neq\gamma'$, one has $D_{\max,\set{F'}}(\sigma\|\gamma')=+\infty$. But since also the input free set $\set{F}$ is a singleton, we have
\[
\min_{\omega \in \set{F}}D_h^{\epsilon_1}(\rho\|\omega)=\max_{\omega \in \set{F}}D_h^{\epsilon_1}(\rho\|\omega)\;,
\]
and condition~\eqref{eq:max=min} is automatically satisfied.
\end{proof}

\subsection{Resource Theory of Bipartite Entanglement}\label{sec:entanglement}

Next, we specialize our results to the resource theory of entanglement. We begin by considering bipartite entanglement, namely, the case in which $\set{F}$ is the set of all separable states of a given bipartite system. Resource morphisms are given by separability-preserving (or non-entangling) CPTP maps, usually denoted as SEPP. One-shot entanglement distillation and dilution under SEPP have been studied in~\cite{brandao--datta}. In what follows we show how our Corollary~\ref{coro:1} is able to guarantee the existence of a SEPP transition directly mapping $\rho$ to $\sigma$, even in situations in which the results of Ref.~\cite{brandao--datta} cannot guarantee the existence of a ``distill-and-dilute'' transition.

In order to illustrate the point, it is enough to consider the exact case, that is, $\epsilon_1 = \epsilon_2 = 0$. The same conclusions hold also in the approximate case, however, some care must be taken in that while here we use the trace-distance to measure approximations, Ref.~\cite{brandao--datta} uses the fidelity. Trace-distance and fidelity are well-known to be equivalent \cite{wildebook13,Nielsen--Chuang} , but approximation parameters must be changed: we leave it to the interested reader to work out the exact factors.

By rewriting the main results of~\cite{brandao--datta} using our notation, the zero-error one-shot SEPP-distillable entanglement $E^{(1)}_{D,\text{\text{SEPP}}}(\rho)$ and the zero-error one-shot \text{SEPP}-entanglement cost $E^{(1)}_{C,\text{SEPP}}(\sigma)$ satisfy
\begin{align}
E^{(1)}_{D,\text{SEPP}}(\rho)\ge \lfloor \mathfrak{D}_{\min}(\rho)\rfloor
\end{align}
and
\begin{align}
E^{(1)}_{C,\text{SEPP}}(\sigma)\le \mathfrak{D}_{\max,\set{F}}(\sigma)+1\;,
\end{align}
respectively. These two relations together guarantee that it is possible to exactly go from $\rho$ to $\sigma$ via SEPP (passing through the maximally entangled state) if
\[
\lfloor \mathfrak{D}_{\min}(\rho)\rfloor\ge \mathfrak{D}_{\max,\set{F}}(\sigma)+1\;,
\]
which is more restrictive than what Theorem~\ref{th:freecase} says, that is,
\[
\mathfrak{D}_{\min}(\rho)\ge \mathfrak{D}_{\max,\set{F}}(\sigma)\;.
\]
This is possible because we do not require the transformation to pass through the maximally entangled state, but we allow it to go directly from $\rho$ to $\sigma$.

\begin{remark}
When working within the resource theory of entanglement, especially in the one-shot regime, it is customary to allow the output system to differ from the input one. Consequently, also the set of free states changes from $\set{F}$ to $\set{F}'$. As already noticed, our bounds can be straightforwardly extended to cover this situation as well: in such a case, all quantities related to the input state $\rho$ will be computed with respect to the input free set $\set{F}$, while all quantities related to the output state $\sigma$ will be computed with respect to the output free set $\set{F}'$.
\end{remark}

\subsection{Existence of a Maximally Resourceful State and Weak-Converse Bounds for Distillation and Dilution}\label{sec:maxres}

In this section we show how Corollary~\ref{coro:constant-overlap} and Theorem~\ref{th:freecase} can be used to formulate sufficient conditions that guarantee that an element $\alpha$ is maximally resourceful, in the sense of Definition~\ref{def:max-res-state}. We also address the related problem of deciding when Corollary~\ref{coro:constant-overlap} and Theorem~\ref{th:freecase} are optimal, i.e., when the sufficient conditions they formulate become also necessary. For the sake of the presentation, we focus here on the case of exact transitions, that is, $\epsilon_1=\epsilon_2=0$, keeping in mind, however, that the results of Section~\ref{sec:main} allow us to go beyond the exact case and to speak of, e.g., almost-maximally resourceful elements.

We begin with the following fact (see also Corollary~4 in~\cite{Liu-Bu-Takagi-PRL}):

\begin{framed}
	\begin{proposition}\label{prop:max-resourceful}
		The following statements hold.
		\begin{enumerate}[(i)]
			\item Let $\alpha\in\set{D}(\Cd)$ be such that $\mathfrak{D}_{\min}(\alpha)=\max_{\rho \in\set{D}(\Cd)} \mathfrak{D}_{\max}(\rho)$, and $\Tr{\omega\ \Pi_{\alpha}}=\operatorname{constant}$, for any $\omega \in \set{F}$. Then $\alpha$ is maximally resourceful in $\set{D}(\Cd)$.
			\item Let $\alpha\in\set{D}(\Cd)$ be such that $\mathfrak{D}_{\min}(\alpha)=\max_{\rho \in\set{D}(\Cd)} \mathfrak{D}_{\max,\set{F}}(\rho)$. Then $\alpha$ is maximally resourceful in $\set{D}(\Cd)$.
		\end{enumerate}
	\end{proposition}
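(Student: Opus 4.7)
The plan is to derive both parts of the proposition by specializing, respectively, Corollary~\ref{coro:constant-overlap} and Theorem~\ref{th:freecase} to the exact regime $\epsilon_1=\epsilon_2=0$, and to observe that the hypotheses on $\alpha$ precisely match the preconditions of those two results. The key conceptual bridge is the identity $D_{\min}(\alpha\|\omega)=-\log\Tr{\omega\,\Pi_{\alpha}}$, which turns statements about overlaps with $\Pi_\alpha$ into statements about the min-divergence, and hence about the entropic monotone $\mathfrak{D}_{\min}$.

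For part (i), I would first unpack Corollary~\ref{coro:constant-overlap} at $\epsilon_1=0$: it asserts that if the map $\omega\mapsto D_{\min}(\alpha\|\omega)=-\log\Tr{\omega\,\Pi_\alpha}$ is constant on $\set{F}$, then $\mathfrak{D}_{\min}(\alpha)\ge\mathfrak{D}_{\max}(\sigma)$ already suffices to conclude $\alpha\succeq\sigma$. The constancy hypothesis on $\Tr{\omega\,\Pi_\alpha}$ in the proposition is exactly the first precondition. For the inequality, I would use that by assumption $\mathfrak{D}_{\min}(\alpha)=\max_{\rho\in\set{D}(\Cd)}\mathfrak{D}_{\max}(\rho)$, so for any target $\sigma\in\set{D}(\Cd)$ one has $\mathfrak{D}_{\min}(\alpha)\ge\mathfrak{D}_{\max}(\sigma)$ tautologically. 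Feeding these two ingredients into Corollary~\ref{coro:constant-overlap} yields $\alpha\succeq\sigma$ for every $\sigma$, which is the definition of being maximally resourceful.

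For part (ii), the argument is even shorter: Theorem~\ref{th:freecase} at $\epsilon_1=\epsilon_2=0$ states that $\mathfrak{D}_{\min}(\rho)\ge\mathfrak{D}_{\max,\set{F}}(\sigma)$ implies $\rho\succeq\sigma$, with no additional assumption on the structure of the overlaps with the free set. The hypothesis $\mathfrak{D}_{\min}(\alpha)=\max_{\rho\in\set{D}(\Cd)}\mathfrak{D}_{\max,\set{F}}(\rho)$ immediately gives $\mathfrak{D}_{\min}(\alpha)\ge\mathfrak{D}_{\max,\set{F}}(\sigma)$ for every $\sigma\in\set{D}(\Cd)$, and applying Theorem~\ref{th:freecase} with $\rho=\alpha$ completes the proof.

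Since the heavy lifting has already been done in Section~\ref{sec:main}, there is no real obstacle; the only slightly delicate point is making sure that, in part (i), the appropriate entropic quantity for $\alpha$ is $\mathfrak{D}_{\min}$ (which is $\mathfrak{D}_h^{\epsilon_1}$ at $\epsilon_1=0$ under the notational convention introduced in Definition~\ref{def:entropic-monotones}) rather than $\mathfrak{D}_{\max}$, and that the constancy condition is correctly identified with the equality $\max_{\omega\in\set{F}}D_{\min}(\alpha\|\omega)=\min_{\omega\in\set{F}}D_{\min}(\alpha\|\omega)$ required by Corollary~\ref{coro:constant-overlap}. Beyond this bookkeeping, each part is a one-line reduction to a previously established sufficient condition.
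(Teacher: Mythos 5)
Your proposal is correct and follows exactly the paper's own argument: part (i) is the specialization of Corollary~\ref{coro:constant-overlap} to $\epsilon_1=\epsilon_2=0$ (with the constancy of $\Tr{\omega\ \Pi_\alpha}$ correctly identified, via $D_{\min}(\alpha\|\omega)=-\log\Tr{\omega\ \Pi_\alpha}$, as the condition $\max_{\omega\in\set{F}}D_h^{0}(\alpha\|\omega)=\min_{\omega\in\set{F}}D_h^{0}(\alpha\|\omega)$), and part (ii) is the direct application of Theorem~\ref{th:freecase}, in both cases using that the maximization over all $\rho$ in the hypothesis yields the needed inequality for every target $\sigma$. No gaps; this matches the paper's proof.
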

\end{framed}

\begin{proof}
	
	Case~(i): being $\Tr{\omega\ \Pi_\alpha}$ constant for any $\omega \in \set{F}$, the assumptions in Corollary~\ref{coro:constant-overlap} are satisfied with $\epsilon_1=\epsilon_2=0$. The proof then follows trivially, from the assumption that $\mathfrak{D}_{\min}(\alpha)=\max_{\rho \in\set{D}(\Cd)} \mathfrak{D}_{\max}(\rho)\ge \mathfrak{D}_{\max}(\sigma)$ for any $\sigma \in \set{D}(\Cd)$.
	
	Case~(ii): in this case we apply Theorem~\ref{th:freecase}, and again the proof follows trivially, from the assumption that $\mathfrak{D}_{\min}(\alpha)=\max_{\rho \in\set{D}(\Cd)} \mathfrak{D}_{\max,\set{F}}(\rho)\ge \mathfrak{D}_{\max,\set{F}}(\sigma)$.
	
\end{proof}

\begin{remark}
	Since, for any $\rho,\sigma$, $D_{\min}(\rho\|\sigma)\le D_{\max}(\rho\|\sigma)\le D_{\max,\set{F}}(\rho\|\sigma)$, condition~(i) in Proposition~\ref{prop:max-resourceful} above implies that $\mathfrak{D}_{\min}(\alpha)=\mathfrak{D}_{\max}(\alpha)=\max_{\rho \in\set{D}(\Cd)} \mathfrak{D}_{\min}(\rho)=\max_{\rho \in\set{D}(\Cd)} \mathfrak{D}_{\max}(\rho)$; analogously, condition~(ii) implies $\mathfrak{D}_{\min}(\alpha)=\mathfrak{D}_{\max,\set{F}}(\alpha)=\max_{\rho \in\set{D}(\Cd)} \mathfrak{D}_{\min}(\rho)=\max_{\rho \in\set{D}(\Cd)} \mathfrak{D}_{\max,\set{F}}(\rho)$.
\end{remark}

The two sufficient conditions considered in Proposition~\ref{prop:max-resourceful} are independent. For example, both in the resource theory of bipartite entanglement and in the resource theory of coherence a golden state exists, namely, the maximally entangled state and the maximally coherent state, respectively. It is also known that these are both in fact maximally resourceful in their respective theories. However, while the maximally coherent state satisfies condition~(i) but not condition~(ii), the maximally entangled state satisfies condition~(ii) but not~(i): see~\cite{Bartosz-Bu-Takagi-Liu} for the explicit calculation.

Another example is provided by the resource theory of genuine multipartite entanglement, in which the free set is taken to be the set of all biseparable states and resource morphisms correspondingly are defined as biseparability-preserving maps. In this case, it is possible to show by explicit calculation~\cite{Contreras-Tejada-Palazuelos-Vicente,Bartosz-Bu-Takagi-Liu} that the generalized GHZ state, that is,
\[
|\Psi_{\mathsf{GHZ}}^{(N,d)}\>:=\frac{1}{\sqrt{d}}\sum_{i=1}^d|i\>^{\otimes{N}}\;,
\]
satisfies condition~(ii) of Proposition~\ref{prop:max-resourceful}. We conclude, therefore, that $|\Psi_{\mathsf{GHZ}}^{(N,d)}\>$ is maximally resourceful.

The following propositions provide sufficient conditions so that the bounds in Corollary~\ref{coro:constant-overlap} and Theorem~\ref{th:freecase} are optimal. In the following proposition, we make it explicit that the input system (with state space $\set{D}(\Cm)$ and free set $\set{F}$) in general may differ from the output system (with state space $\set{D}(\Cn)$ and free set $\set{F}'$). A related result is Theorem~2 of Ref.~\cite{Liu-Bu-Takagi-PRL}.

\begin{framed}
\begin{proposition}[Weak-converse bounds for dilution]\label{prop:equivalent-dilution}
	When dealing with transitions from an input system $(\Cm,\set{F})$ to an output system $(\Cn,\set{F}')$, the following statements hold.
	\begin{enumerate}[(i)]
		\item Suppose that $\alpha\in\set{D}(\Cm)$ satisfies $\mathfrak{D}_{\min}(\alpha)=\mathfrak{D}_{\max}(\alpha)$; then, for any $\sigma\in\set{D}(\Cn)$ 
		\begin{align}\label{eq:case1}
		\alpha\succeq\sigma\qquad\implies\qquad \mathfrak{D}_{\min}(\alpha)\ge\mathfrak{D}_{\max}(\sigma)\;.
		\end{align}
		\item Suppose that $\alpha\in\set{D}(\Cm)$ satisfies $\mathfrak{D}_{\min}(\alpha)=\mathfrak{D}_{\max,\set{F}}(\alpha)$; then, for any $\sigma\in\set{D}(\Cn)$ 
		\begin{align}\label{eq:case2}
		\alpha\succeq\sigma\qquad\implies\qquad \mathfrak{D}_{\min}(\alpha)\ge\mathfrak{D}_{\max,\set{F}'}(\sigma)\;.
		\end{align}
	\end{enumerate}
\end{proposition}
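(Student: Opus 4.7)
The plan is short: I would prove both (i) and (ii) by a direct data-processing argument applied to a suitable max-type resource monotone. First I would unpack the hypothesis $\alpha\succeq\sigma$ into the existence of a resource morphism $\mE$ from the input system $(\Cm,\set{F})$ to the output system $(\Cn,\set{F}')$ with $\mE(\alpha)=\sigma$ and $\mE(\set{F})\subseteq\set{F}'$. Both cases will then reduce to chaining monotonicity of a max-divergence with the assumed saturation $\mathfrak{D}_{\min}(\alpha)=\mathfrak{D}_{\max}(\alpha)$ (respectively $\mathfrak{D}_{\min}(\alpha)=\mathfrak{D}_{\max,\set{F}}(\alpha)$) on the input state.

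For case~(i), I would invoke the standard monotonicity of $D_{\max}$ under CPTP maps: for every $\omega\in\set{F}$, one has $\mE(\omega)\in\set{F}'$ and $D_{\max}(\sigma\|\mE(\omega))\le D_{\max}(\alpha\|\omega)$. Taking the infimum over all $\omega'\in\set{F}'$ on the left-hand side (which can only decrease the bound relative to restricting to the image $\mE(\set{F})$) and the infimum over $\omega\in\set{F}$ on the right-hand side yields $\mathfrak{D}_{\max}(\sigma)\le\mathfrak{D}_{\max}(\alpha)$. Combining this with $\mathfrak{D}_{\min}(\alpha)=\mathfrak{D}_{\max}(\alpha)$ produces~\eqref{eq:case1}.

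For case~(ii), I would repeat the same sequence of inequalities, but with $D_{\max,\set{F}}$ on the input side and $D_{\max,\set{F}'}$ on the output side. The key input is the cross-system version of the monotonicity remark given in the paper right after Definition~\ref{def:max-div-resourcefulness}: if $(\lambda-1)^{-1}(\lambda\omega-\rho)\in\set{F}$ for some admissible $\lambda$, then applying $\mE$ (which sends $\set{F}$ into $\set{F}'$) gives $(\lambda-1)^{-1}(\lambda\mE(\omega)-\mE(\rho))\in\set{F}'$, hence $D_{\max,\set{F}'}(\mE(\rho)\|\mE(\omega))\le D_{\max,\set{F}}(\rho\|\omega)$. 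Taking infima exactly as in case~(i) yields $\mathfrak{D}_{\max,\set{F}'}(\sigma)\le\mathfrak{D}_{\max,\set{F}}(\alpha)$, and invoking the saturation hypothesis $\mathfrak{D}_{\min}(\alpha)=\mathfrak{D}_{\max,\set{F}}(\alpha)$ closes the argument.

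The only step that demands mild care is this cross-system data-processing inequality in case~(ii): the two free sets $\set{F}$ and $\set{F}'$ must be kept carefully apart, so that the witness obtained by applying $\mE$ is tested for membership in the correct output free set. Beyond that, the proposition is a one-line corollary of monotonicity together with the saturation hypothesis on $\alpha$, and no further estimates are required; no minimax manipulation, no convex decomposition argument, and no appeal to the explicit test-and-prepare maps of Theorems~\ref{th:3} and~\ref{th:freecase} is needed for the converse direction.
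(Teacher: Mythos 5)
Your proposal is correct and follows essentially the same route as the paper's proof: both reduce the statement to the monotonicity of $\mathfrak{D}_{\max}$ (respectively $\mathfrak{D}_{\max,\set{F}}$) under resource morphisms, chained with the saturation hypothesis $\mathfrak{D}_{\min}(\alpha)=\mathfrak{D}_{\max}(\alpha)$ (respectively $\mathfrak{D}_{\min}(\alpha)=\mathfrak{D}_{\max,\set{F}}(\alpha)$). The only difference is that you spell out the cross-system data-processing step for $D_{\max,\set{F}}$, which the paper simply asserts by calling $\mathfrak{D}_{\max,\set{F}}$ a resource monotone.
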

\end{framed}

\begin{proof}
	Case~(i): suppose that $\alpha\succeq\sigma$, so that there exists a resource morphism $\mE:\set{D}(\Cm)\to\set{D}(\Cn)$ such that $\mE(\alpha)=\sigma$; then,
	\begin{align*}
	\mathfrak{D}_{\min}(\alpha)&=\mathfrak{D}_{\max}(\alpha)\\
	&\ge \mathfrak{D}_{\max}(\mE(\alpha))\\
	&=\mathfrak{D}_{\max}(\sigma)\;,
	\end{align*}
	where the inequality in the second line comes from the fact that $\mathfrak{D}_{\max}$ is a resource monotone.
	
	Case~(ii): suppose that $\alpha\succeq\sigma$, then
	\begin{align*}
	\mathfrak{D}_{\min}(\alpha)&=\mathfrak{D}_{\max,\set{F}}(\alpha)\\
	&\ge \mathfrak{D}_{\max,\set{F}'}(\mE(\alpha))\\
	&=\mathfrak{D}_{\max,\set{F}'}(\sigma)\;,
	\end{align*}
	where the inequality in the second line comes from the fact that $\mathfrak{D}_{\max,\set{F}}$ is a resource monotone.
\end{proof}

An analogous weak converse for distillation is the following (see also Theorem~5 of~\cite{Liu-Bu-Takagi-PRL} for a related result).

\begin{framed}
\begin{proposition}[Weak-converse bound for distillation]\label{prop:equivalent-distillation}
Consider an input system $(\Cm,\set{F})$ and an output system $(\Cn,\set{F}')$, and let $\alpha\in\set{D}(\Cn)$ be a target state such that $\mathfrak{D}_{\max,\set{F}'}(\alpha)=\mathfrak{D}_{\min}(\alpha)$. Then, for any $\rho\in\set{D}(\Cm)$,
\begin{align}
\rho\succeq\alpha\qquad\implies\qquad \mathfrak{D}_{\min}(\rho)\ge \mathfrak{D}_{\max,\set{F}'}(\alpha)\;.
\end{align}
\end{proposition}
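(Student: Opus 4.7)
The plan is to mirror the argument used for Proposition~\ref{prop:equivalent-dilution}, but with the ``flattening'' hypothesis $\mathfrak{D}_{\max,\set{F}'}(\alpha)=\mathfrak{D}_{\min}(\alpha)$ exploited on the output side rather than on the input side. The crucial fact is that, since $\mathfrak{D}_{\min}$ is defined as an infimum of $D_{\min}$ over the free set, and $D_{\min}$ is monotone under arbitrary CPTP maps, the quantity $\mathfrak{D}_{\min}$ is itself monotonically non-increasing under resource morphisms: for any resource morphism $\mE:\set{D}(\Cm)\to\set{D}(\Cn)$ and any $\rho\in\set{D}(\Cm)$, one has
\[
\mathfrak{D}_{\min}(\rho)=\inf_{\omega\in\set{F}}D_{\min}(\rho\|\omega)\ge\inf_{\omega\in\set{F}}D_{\min}(\mE(\rho)\|\mE(\omega))\ge\inf_{\omega'\in\set{F}'}D_{\min}(\mE(\rho)\|\omega')=\mathfrak{D}_{\min}(\mE(\rho))\;,
\]
where the second inequality uses precisely the defining property $\mE(\set{F})\subseteq\set{F}'$ of a resource morphism.

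From here the proof is immediate. If $\rho\succeq\alpha$ there exists, by Definition~\ref{def:preorder}, a resource morphism $\mE$ with $\mE(\rho)=\alpha$. Chaining the monotonicity just discussed with the hypothesis $\mathfrak{D}_{\min}(\alpha)=\mathfrak{D}_{\max,\set{F}'}(\alpha)$ yields
\[
\mathfrak{D}_{\min}(\rho)\ge\mathfrak{D}_{\min}(\mE(\rho))=\mathfrak{D}_{\min}(\alpha)=\mathfrak{D}_{\max,\set{F}'}(\alpha)\;,
\]
which is the claimed inequality.

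There is essentially no obstacle here: the content of the proposition lies entirely in the hypothesis $\mathfrak{D}_{\max,\set{F}'}(\alpha)=\mathfrak{D}_{\min}(\alpha)$, which collapses the gap between the distillation monotone $\mathfrak{D}_{\min}$ (which we can compare between $\rho$ and $\mE(\rho)$ by monotonicity) and the log-robustness $\mathfrak{D}_{\max,\set{F}'}$ appearing in Theorem~\ref{th:freecase}. The only subtlety worth double-checking is that the monotonicity of $\mathfrak{D}_{\min}$ under resource morphisms genuinely extends to the inter-system case where input and output free sets differ; the display above confirms this, and it is the natural dimension-changing analogue of the monotonicity already invoked for $\mathfrak{D}_{\max}$ and $\mathfrak{D}_{\max,\set{F}}$ in the proof of Proposition~\ref{prop:equivalent-dilution}.
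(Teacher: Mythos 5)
Your proof is correct and follows essentially the same route as the paper's: monotonicity of $\mathfrak{D}_{\min}$ under resource morphisms applied to $\mE(\rho)=\alpha$, followed by the hypothesis $\mathfrak{D}_{\min}(\alpha)=\mathfrak{D}_{\max,\set{F}'}(\alpha)$. The only difference is that you spell out explicitly why $\mathfrak{D}_{\min}$ is monotone in the inter-system case (via $\mE(\set{F})\subseteq\set{F}'$ and data processing for $D_{\min}$), a fact the paper simply takes for granted.
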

\end{framed}

\begin{proof}
If $\rho\succeq\alpha$ then
$\mathfrak{D}_{\min}(\rho)\ge \mathfrak{D}_{\min}(\mE(\rho))=\mathfrak{D}_{\min}(\alpha)=\mathfrak{D}_{\max,\set{F}'}(\alpha)$.
\end{proof}

\begin{remark}
	By looking at the proofs of Theorem~\ref{th:3} and Theorem~\ref{th:freecase}, we see that the resource morphisms used there have been constructed as test-and-prepare quantum channels. As a consequence, Propositions~\ref{prop:equivalent-dilution} and~\ref{prop:equivalent-distillation} above can be interpreted as giving sufficient conditions for which test-and-prepare channels are provably optimal in resource manipulation, despite constituting a very special class among all CPTP maps. 
\end{remark}

A natural question to ask, at this point, is whether density matrices always exist, for which Propositions~\ref{prop:equivalent-dilution} and~\ref{prop:equivalent-distillation} hold, namely, for which test-and-prepare channels provide the optimal resource morphisms. As it turns out, perhaps surprisingly, in any resource theory with non-empty closed and convex $\set{F}$, even if a maximally resourceful element may not exist, a \textit{golden state}, namely, a rank-one density matrix $\Psi_{+}\in \set{D}(\Cd)$ such that $\max_{\rho \in\set{D}(\Cd)} \mathfrak{D}_{\min}(\rho)=\mathfrak{D}_{\min}(\Psi_{+})=\mathfrak{D}_{\max}(\Psi_{+})=\max_{\rho \in\set{D}(\Cd)} \mathfrak{D}_{\max}(\rho)$, can always be found~\cite{Liu-Bu-Takagi-PRL,Bartosz-Bu-Takagi-Liu}. However, before concluding that test-and-prepare morphisms are optimal for golden states, one still needs to verify that, either $\Psi_{+}$ satisfies $\Tr{\omega\ \Psi_+}=\text{constant}$ for all free $\omega$, or $\mathfrak{D}_{\max}(\Psi_{+})=\mathfrak{D}_{\max,\set{F}}(\Psi_{+})$ also holds, and both such extra conditions depend on the actual resource theory at hand. The resource theories of coherence and bipartite entanglement again provide two paradigmatic examples in this sense.

\section{Conclusions: resource comparison without a maximally resourceful state}\label{conclusion}

In this work we have derived sufficient (and, in some cases, necessary) conditions for the existence of a CPTP linear map transforming, up to arbitrary accuracy, an input state $\rho$ into a target state $\sigma$, under the additional condition that a convex non-empty subset of states is mapped into itself. Such a framework is particularly suitable to be applied to generalized resource theories, in which the convex subset represents the set of free states in the theory, so that any transformation that maps free states to free states state is itself free, in the sense that it cannot create resources for free. The conditions that we formulated are expressed in terms of entropic monotones, which are computed independently for the input state and the target state. In this way, we can still speak of the resource's worth of any state, taken individually, even if a privileged maximally resourceful state does not exist, so that the tasks of resource distillation and dilution (which typically are used to quantify the resource content) cannot be defined. Aspects that we did not cover in this work are the scaling of the interconversion rates in the case in which a composition rule is given, and the complexity of numerically computing the entropic monotones used to compare resources.

\acknowledgments

The authors are very grateful to Kaifeng Bu, Zi-Wen Liu, Bartosz Regula, and Ryuji Takagi, for insightful discussions about general resource theories, clarifications about the relations between this work and theirs, and a careful reading of a preliminary version of this work. W.Z. thanks Masahito Hayashi for insightful discussions and hospitality during his visit to the Peng Cheng Laboratory, Shenzhen, China. F.B. acknowledges support from the Japan Society for the Promotion of Science (JSPS) KAKENHI, Grants No.19H04066 and No.20K03746.

\bibliographystyle{alphaurl}
\bibliography{biblio-new}	

\end{document}